\theoremstyle{plain}
\newtheorem{theorem}{Theorem}
\newtheorem{corollary}[theorem]{Corollary}
\newtheorem{lemma}[theorem]{Lemma}
\newtheorem{proposition}[theorem]{Proposition}
\theoremstyle{remark}
\newtheorem*{remark*}{Remark}
\theoremstyle{definition}
\newtheorem*{example}{Example}
\newcommand\CC{{\mathbb C}}
\newcommand\RR{{\mathbb R}}
\newcommand\into{\int_\Omega}
\renewcommand\[{\begin{equation}}
\renewcommand\]{\end{equation}}
\newcommand\spr[1]{\langle#1\rangle}
\newcommand\hol{_{\text{hol}}}
\newcommand\ldvah{L^2_h}
\newcommand\ldvahh{L^2_{\text{hol},h}}
\newcommand\LL{{\mathcal L}}
\newcommand\HH{\mathcal H}
\newcommand\HHe{{\HH _\epsilon}}
\newcommand\intR{\int_\RR}
\renewcommand\Re{\operatorname{Re}}
\newcommand\tKe{\widetilde K_\epsilon}
\newcommand\Te{T^{(\epsilon)}}
\newcommand\tTe{\widetilde T^{(\epsilon)}}
\newcommand\tLe{\widetilde L_\epsilon}
\newcommand\tPe{\widetilde P_\epsilon}
\newcommand\jedna{\mathbf1}
\newcommand\oy{\overline y}
\newcommand\oz{\overline z}
\newcommand\PP{\mathcal P}
\newcommand\Peo{\PP_\epsilon^\Omega}
\newcommand\FFe{\mathcal F_\epsilon}
\newcommand\intC{\int_{\CC}}
\newcommand\LGe{\mathfrak L_\epsilon}
\newcommand\cR{\mathcal R}
\newcommand\ow{\overline w}
\newcommand\cT{\mathcal T}
\newcommand\dbar{\overline\partial{}}
\newcommand\La{L^{\alpha}}
\newcommand\LGea{\LGe^{(\alpha)}}
\newcommand\LLa{\LL^{(\alpha)}}
\newcommand\nua{\nu^{(\alpha)}}
\begin{document}

\title[Orthogonal polynomials and quasi-classical asymptotics]%
 {Orthogonal Polynomials, Laguerre Fock Space and Quasi-classical Asymptotics}
\author{Miroslav~Engli\v s}
\address{Mathematics Institute, Silesian University in Opava,
 Na~Rybn\'\i\v cku~1, 74601~Opava, Czech Republic {\rm and}
 Mathematics Institute, \v Zitn\' a 25, 11567~Prague~1, Czech Republic}
\email{englis{@}math.cas.cz}
\author{S.~Twareque Ali}
\address{Department of Mathematics and Statistics,
 Concordia University, Montr\'eal, Qu\'ebec, Canada~H3G~1M8}
\email{twareque.ali{@}concordia.ca}
\begin{abstract}
Continuing our earlier investigation of the Hermite case [J.~Math. Phys. 55
(2014), 042102], we~study an unorthodox variant of the Berezin-Toeplitz
quantization scheme associated with Laguerre polynomials. In~particular,
we~describe a ``Laguerre analogue'' of the classical Fock (Segal-Bargmann)
space and the relevant semi-classical asymptotics of its Toeplitz operators;
the~former actually turns out to coincide with the Hilbert space appearing
in the construction of the well-known Barut-Girardello coherent states.
Further extension to the case of Legendre polynomials is likewise discussed.
\end{abstract}
\maketitle

\section{Introduction}
One~of the very well studied methods of quantizing K\"ahler manifolds is the
Berezin-Toeplitz quantization \cite{BeQ,BMS}. In~the simplest case of a phase
space $\Omega$ admitting a global real-valued potential $\Psi$ (so~that the
K\"ahler form is given by $\omega=\partial\dbar\Psi$), one~considers the
$L^2$~space
\[ \ldvah = \{f\text{ measurable on }\Omega: \into |f|^2 e^{-\Psi/h}
 \,\omega^n <\infty \} \qquad(h>0),  \label{tTB}   \]
its subspace $\ldvahh$ of functions holomorphic on $\Omega$ (the~weighted
Bergman space), and the orthogonal projection $P_h:\ldvah\to\ldvahh$.
For~a~bounded measurable function $f$ on~$\Omega$, the~Toeplitz operator
$T_f$ on $\ldvahh$ with symbol $f$ is then defined~by
\[ T_f u = P_h(fu).   \label{tTC}   \]
This~is, in~fact, an~integral operator: more precisely, the space $\ldvahh$
turns out to be a reproducing kernel Hilbert space~\cite{Aro} possessing a
reproducing kernel $K_h(x,y)$, and (\ref{tTC}) can be rewritten~as
\[ T_f u (x) = \into u(y) f(y) K_h(x,y) \, e^{-\Psi(y)/h} \,\omega(y)^n .
 \label{tTD}  \]
When the manifold $\Omega$ is not simply connected, one has to assume that
the cohomology class of $\omega$ is integral, so~that there exists a Hermitian
line bundle $\LL$ with the canonical connection whose curvature form coincides
with~$\omega$; and the spaces $\ldvahh$ (and~$\ldvah$) get replaced by the
space of all holomorphic (or~all measurable, respectively) square-integrable
sections of~$\LL^{\otimes k}$, $k=\frac1h=1,2,3,\dots$. In~any case, under
reasonable technical assumptions on~$\Omega$ and~$\omega$, the~Toeplitz
operators satisfy
\[ T_f T_g \approx T_{fg}+h T_{C_1(f,g)} + h^2 T_{C_2(f,g)} + \dots \qquad
 \text{as } h\searrow0,   \label{tTE}   \]
with some bidifferential operators $C_j$ such that $C_1(f,g)-C_1(g,f)=\frac
i{2\pi}\{f,g\}$, implying in particular that the ``correspondence principle''
\[ T_f T_g - T_g T_f \approx \frac{ih}{2\pi} T_{\{f,g\}}
 \label{tTA}  \]
holds; here $\{\cdot,\cdot\}$ denotes the Poisson bracket.
Furthermore, the bidifferential operators $C_j$ can be expressed
in terms of covariant derivatives, with contractions of the curvature tensor
and its covariant derivatives as coefficients, thus encoding various geometric
properties of $(\Omega,\omega)$ in an intriguing~way. The~positive parameter
$h$ plays the role of the Planck constant.

The~purpose of the present paper, which is a sequel to \cite{AEH},
is~to highlight an operator calculus of a completely different flavour,
which nonetheless bears certain resemblance to~(\ref{tTA}) and~(\ref{tTD}),
and~arises in a quite unexpected setting --- namely, in~connection with
orthogonal polynomials.
To~be more specific, let $H_n(x)$ stand for the standard Hermite polynomials,
and, for $0<\epsilon<1$, set
\[ K_\epsilon(x,y) = \sum_{n=0}^\infty \epsilon^n \|H_n\|^{-2} H_n(x)
 \overline{H_n(y)}, \qquad x,y\in\RR.   \label{tTF}  \]
Here $\|H_n\|$ denotes the norm in $L^2(\RR,e^{-x^2}\,dx)$, where the $\{H_n\}$
form an orthogonal basis. Then $K_\epsilon$ is a positive-definite function,
and, hence, determines uniquely a Hilbert space $\HHe$ of functions on $\RR$
for which $K_\epsilon$ is the reproducing kernel~\cite{Aro}; this space has
been studied in \cite{karp} and was also encountered in \cite{AliKr} when
studying ``squeezed'' coherent states and their representations in terms of
Hermite polynomials of a complex variable. (The~definition of this kernel
may perhaps seem a bit artificial at first glance,
but~so must have seemed (\ref{tTB}) when it first came around in Berezin's
papers!) For~a (reasonable) function $f$ on~$\RR$, set
\[ T_f u(x) := \intR u(y) f(y) K_\epsilon(x,y) \, e^{-y^2}\,dy.  \label{tTG} \]
This certainly resembles the expression (\ref{tTD}) for Toeplitz operators,
however, note that this time there is no $L^2$ space around like~(\ref{tTB})
which would contain $\HHe$ as a closed subspace (in~fact, the set
$\{f(x)e^{-x^2/2}: f\in\HHe\}$ is a dense, rather than proper closed,
subset of~$L^2(\RR)$), so~there is no projection like $P_h$ around and
the original definition (\ref{tTC}) makes no sense.
In~particular, there is no reason \emph{a priori} even to expect (\ref{tTG})
to be defined, not to say bounded, on~some space (whereas with (\ref{tTC}) it
immediately follows that $\|T_f\|$ is not greater than the norm of the operator
of ``multiplication by~$f$'' on~$L^2$, hence $\|T_f\|\le\|f\|_\infty$).
It~may therefore come as a bit of a surprise that
(\ref{tTG}) actually yields, for $f\in L^\infty(\RR)$, a~bounded operator
on~$L^2(\RR)$, and, moreover, $T_f$~enjoys a nice asymptotic behaviour as
$\epsilon\nearrow1$, which we saw in \cite{AEH} to correspond, in~a~very
natural sense, to~the semiclassical limit $h\searrow0$ in the original
quantization setting.

Furthermore, it~turns out that the space $\HHe$ actually consists, up~to a
trivial equivalence, precisely of restrictions to $\RR$ of holomorphic
functions forming a very standard reproducing kernel space on the entire
complex plane~$\CC$. Namely, in~addition to being an orthogonal basis in
$L^2(\RR,e^{-|x|^2}dx)$, the Hermite polynomials also satisfy an orthogonality
relation over~$\CC$ \cite{karp,vanEijnd} :
\[ \intC H_n(z) \overline{H_m(z)} \; e^{-\frac{2\epsilon}{1+\epsilon} x^2
 -\frac{2\epsilon}{1-\epsilon} y^2} \; dx \, dy =
 \frac{\sqrt{1-\epsilon^2}}{2\epsilon} \, n!2^n\pi \epsilon^{-n} \delta_{mn},
 \qquad z=x+yi,   \label{tVI}  \]
It~follows that the multiplication operator
$$ M: f(z) \longmapsto \frac{\sqrt{2\epsilon}}{(1-\epsilon^2)^{1/4}\pi^{1/4}}
 e^{\frac{\epsilon^2}{1-\epsilon^2}z^2} f(z)  $$
maps the space $\HHe$ onto the space of holomorphic functions on~$\CC$ with
reproducing kernel
$$ F_\epsilon(z,w) := \frac{2\epsilon}{(1-\epsilon^2)\pi} K_\epsilon(z,w)
 = \frac{2\epsilon}{(1-\epsilon^2)\pi} e^{\frac{2\epsilon}{(1-\epsilon^2)}
 z\overline w} ,  $$
that~is, onto the standard Fock (Segal-Bargmann) space
$$ \FFe = L^2\hol(\CC, d\mu_\epsilon)   $$
of~all entire functions on $\CC$ square-integrable with respect to the
Gaussian measure
$$ d\mu_\epsilon(z) := e^{-2\epsilon|z|^2/(1-\epsilon)}\,dz,   $$
where $dz$ stands for the Lebesgue area measure on~$\CC$.
Now~$\FFe$ is precisely the space $\ldvahh$ as in~(\ref{tTB}) for $\Omega=\CC$
equipped with the standard (i.e.~Euclidean) K\"ahler structure.
Using the above correspondence between $\FFe$ and $L^2(\RR,e^{-|x|^2}dx)$,
one~can thus transfer the Toeplitz operators (\ref{tTD}) on $\FFe$ into
operators on $L^2(\RR,e^{-|x|^2}dx)$ and, via~another multiplication operator,
on~$L^2(\RR)$. The~latter turn out to belong to the standard Weyl calculus,
and~it was shown in \cite{AEH} that in this way one can actually recover,
from this seemingly totally unrelated \emph{Ansatz} involving Hermite
polynomials, the~whole Berezin-Toeplitz quantization (on~$\CC$) reviewed
in the beginnning.

In~the present paper, we~show that all the above, in~some sense, remains
in force also for the Laguerre polynomials $L_n$ in the place of~$H_n$.
In~particular, we~establish the existence of a certain analogue,
associated to the Laguerre polynomials, of~the Fock spaces~$\FFe$,
and study the semi-classical asymptotics of the Toeplitz operators there.
Surprisingly, this ``Laguerre Fock space'' turns out to coincide with the
space of entire functions discovered by Barut and Girardello \cite{BaG}
in the construction of coherent states that nowadays bear their~name.
(Similar spaces were also obtained in \cite{Ake} while working with
ensembles of non-Hermitian matrices and in \cite{jotha,karp,thanga}.)
The~associated Toeplitz operators and their asymptotics just mentioned,
however, up~to the authors' knowledge seem not to have previously appeared
in the literature: it~turns out that they again satisfy the correspondence
principle~\eqref{tTA}, but~with the Poisson bracket coming from the flat
metric on the punctured complex plane $\CC\setminus\{0\}$
(which is somewhat surprising).
We~also discuss the case of Legendre polynomials,
where things turn to work out somewhat differently.

The~necessary standard material on Laguerre polynomials is reviewed in
Section~2, and the associated reproducing kernel Hilbert spaces are
introduced there as well. The~Laguerre Fock space is discussed in Section~3,
and its Toeplitz operators in Section~5. A~result exhibiting the Laguerre
polynomials as a certain ``squeezed'' basis of the Laguerre Fock space is
discussed in Section~4. The~case of Legendre polynomials is analyzed
in Section~6, and some concluding remarks and speculations are collected
in the final Sections~7 and~8.

\section{Laguerre polynomials}
The~Laguerre polynomials $L_n(x)$, $n=0,1,2,\dots$, are defined~by the formula
$$ L_n(x) = \frac{e^x}{n!} \frac{d^n}{dx^n} x^n e^{-x} .   $$
They are orthonormal on the half-line $\RR_+=(0,+\infty)$ with respect to the
weight~$e^{-x}$; thus the functions
$$ l_n(x) := e^{-x/2} L_n(x), \qquad n=0,1,2,\dots,   $$
form an orthonormal basis of $L^2(\RR_+)$. They can also be obtained from
the generating function
\[ \sum_{n=0}^\infty L_n(x) z^n = \frac1{1-z} e^{\frac{xz}{z-1}},
 \qquad x\in\CC, |z|<1 .    \label{tLG}   \]
The~series
\[ L_\epsilon(x,y) = \sum_{n=0}^\infty \epsilon^n L_n(x) L_n(y),  \qquad
 \tLe(x,y) = \sum_{n=0}^\infty \epsilon^n l_n(x) l_n(y)
 = \frac{L_\epsilon(x,y)}{e^{(x+y)/2}}    \label{tRP}  \]
converge for all $x,y>0$, and
\[ L_\epsilon(x,y) = \frac1{1-\epsilon} e^{-\frac\epsilon{1-\epsilon}(x+y)}
 I_0\Big(\frac{2\sqrt{xy\epsilon}}{1-\epsilon}\Big),   \label{tRPx}  \]
where
$$ I_0(z) = \sum_{k=0}^\infty \Big(\frac{z^k}{k!2^k}\Big)^2   $$
is~the modified Bessel function; see~e.g.~\cite[\S6.2]{Askey}.
The~differential equation for Laguerre polynomials
$$ xL''_n(x) + (1-x)L'_n(x) + nL_n(x) =0   $$
is~equivalent to the equation
\[ Al_n=nl_n, \qquad Au(x):=-xu''(x)-u'(x)+\frac{x-2}4 u(x),
 \label{LAX}   \]
for the functions~$l_n(x)$.

Drawing inspiration from (\ref{tTD}), we~may define, for a function
(``symbol'') $f$ on~$\RR_+$, the corresponding ``Toeplitz operator''~$\tTe_f$,
$0<\epsilon<1$, on~$L^2(\RR_+)$~by
$$ \tTe_f u(x) := \int_0^\infty u(y) f(y) \tLe(x,y) \, dy .    $$
As~in \cite{AEH}, these turn out to be actually bounded operators, and possess
a kind of ``semi-classical'' asymptotic expansion as $\epsilon\nearrow1$.

\begin{theorem}
For $f\in L^\infty(\RR_+)$ the operator $\tTe_f$ is bounded on $L^2(\RR_+)$.
\end{theorem}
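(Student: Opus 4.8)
The plan is to realise $\tTe_f$ as a composition of two manifestly bounded operators. Writing $M_f$ for the operator of multiplication by $f$ on $L^2(\RR_+)$, which is bounded with $\|M_f\|=\|f\|_\infty$ because $f\in L^\infty(\RR_+)$, and $P_\epsilon$ for the integral operator with kernel $\tLe(x,y)$, the very definition of $\tTe_f$ reads
$$ \tTe_f u(x)=\int_0^\infty f(y)u(y)\,\tLe(x,y)\,dy=(P_\epsilon M_f u)(x), $$
so that $\tTe_f=P_\epsilon M_f$. It therefore suffices to show that $P_\epsilon$ is bounded, for then $\|\tTe_f\|\le\|P_\epsilon\|\,\|M_f\|=\|P_\epsilon\|\,\|f\|_\infty$.

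The crucial observation is that, by \eqref{tRP}, $\tLe(x,y)=\sum_n\epsilon^n l_n(x)l_n(y)$ is exactly the kernel of the operator that is diagonal in the orthonormal basis $\{l_n\}$ of $L^2(\RR_+)$ with eigenvalue $\epsilon^n$ on $l_n$; that is, $P_\epsilon u=\sum_n\epsilon^n\spr{u,l_n}\,l_n$, which is nothing but $\epsilon^A$ for the operator $A$ of \eqref{LAX}. Since $0<\epsilon<1$, the eigenvalues $\epsilon^n$ all lie in $(0,1]$, so $P_\epsilon$ is a positive contraction with $\|P_\epsilon\|=1$; consequently $\tTe_f$ is bounded and $\|\tTe_f\|\le\|f\|_\infty$.

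The only point that requires genuine care is the identification of the integral operator $P_\epsilon$ with the diagonal operator, i.e.~the interchange of the summation over $n$ with the integration over $y$. I~would justify this by dominated convergence: Cauchy--Schwarz in $n$ (splitting $\epsilon^n=\epsilon^{n/2}\cdot\epsilon^{n/2}$) gives $\sum_n\epsilon^n|l_n(x)|\,|l_n(y)|\le\tLe(x,x)^{1/2}\tLe(y,y)^{1/2}$, and the closed form \eqref{tRPx}, combined with the large-argument asymptotics $I_0(t)\sim e^t/\sqrt{2\pi t}$, shows that $\tLe(y,y)$ decays exponentially in $y$ (at leading rate $\frac{1-\sqrt\epsilon}{1+\sqrt\epsilon}$), whence $y\mapsto\tLe(y,y)^{1/2}$ belongs to $L^2(\RR_+)$. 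For fixed $x$ and $v:=fu\in L^2(\RR_+)$ this makes $\int_0^\infty|v(y)|\sum_n\epsilon^n|l_n(x)|\,|l_n(y)|\,dy$ finite, so Fubini's theorem legitimises the interchange and yields $(P_\epsilon v)(x)=\sum_n\epsilon^n\spr{v,l_n}l_n(x)$ pointwise. This exponential decay of the kernel on the diagonal is thus the one quantitative ingredient; with it in hand the boundedness of $\tTe_f$ reduces to the soft fact that it is a contraction composed with a multiplication operator.
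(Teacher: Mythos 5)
Your proof is correct and follows essentially the same route as the paper: both rest on the identity $\tTe_f u=\sum_n\epsilon^n\spr{fu,l_n}l_n$, i.e.\ on the observation that $\tTe_f$ is the contraction $\epsilon^A$ (diagonal in the basis $\{l_n\}$ with eigenvalues $\epsilon^n\in(0,1]$) composed with the multiplication operator $M_f$, giving $\|\tTe_f\|\le\|f\|_\infty$ via Parseval. The only difference is that you explicitly justify the interchange of summation and integration --- which the paper takes for granted --- and your Cauchy--Schwarz plus exponential-decay-of-$\tLe(y,y)$ argument for that step is sound.
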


\begin{proof}
By~(\ref{tRP})
$$ \tTe_f u = \sum_n \epsilon^n \spr{fu,l_n}l_n .   $$
Thus, for any $0<\epsilon<1$,
$$ \|\tTe_f u\|^2 = \sum_n \epsilon^{2n} |\spr{fu,l_n}|^2
 \le \sum_n |\spr{fu,l_n}|^2 = \|fu\|^2 \le \|f\|_\infty^2 \|u\|^2,  $$
so $\|\tTe_f\|\le\|f\|_\infty$.
\end{proof}

\begin{theorem}
We~have
$$ \tTe_f = \epsilon^A M_f,  $$
where $A$ is as in~$(\ref{LAX})$, $\epsilon^A$~is understood in the sense
of the spectral theorem, and $M_f$ stands for the operator of
``multiplication by~$f$''. Consequently, as~$\epsilon\nearrow1$,
\[ \tTe_f u \approx \sum_{k=0}^\infty \frac{(\log\epsilon)^k}{k!} A^k(fu).
 \label{tRJ}  \]
\end{theorem}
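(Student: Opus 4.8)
The plan is to exploit the spectral decomposition of the operator $A$ provided by (\ref{LAX}), which tells us that the orthonormal basis $\{l_n\}$ of $L^2(\RR_+)$ consists precisely of eigenfunctions of $A$ with eigenvalues $Al_n=nl_n$. Since the $l_n$ are eigenvectors, the self-adjoint operator $A$ admits the spectral representation $A=\sum_n n\,\spr{\cdot,l_n}l_n$, and by the spectral theorem any (Borel) function $\phi$ of $A$ acts by $\phi(A)u=\sum_n \phi(n)\spr{u,l_n}l_n$. Applying this to the function $\phi(t)=\epsilon^t$ gives $\epsilon^A u=\sum_n \epsilon^n\spr{u,l_n}l_n$, which is well-defined and bounded for $0<\epsilon<1$ since $0<\epsilon^n\le1$.

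First I would recall from the proof of the preceding theorem the identity $\tTe_f u=\sum_n \epsilon^n\spr{fu,l_n}l_n$, which follows directly by expanding $\tLe(x,y)=\sum_n\epsilon^n l_n(x)l_n(y)$ in the defining integral for $\tTe_f$. Next I would observe that $M_f u=fu$ by definition of the multiplication operator, so $\spr{fu,l_n}=\spr{M_f u,l_n}$. Substituting, $\tTe_f u=\sum_n\epsilon^n\spr{M_f u,l_n}l_n=\epsilon^A(M_f u)$, where the last equality is exactly the spectral formula for $\epsilon^A$ applied to the vector $M_f u$. This establishes the factorization $\tTe_f=\epsilon^A M_f$.

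For the asymptotic expansion (\ref{tRJ}), the plan is to write $\epsilon^A=e^{(\log\epsilon)A}$ and formally expand the exponential as a power series, yielding $\epsilon^A=\sum_{k=0}^\infty\frac{(\log\epsilon)^k}{k!}A^k$. Applying this to $M_f u=fu$ gives the stated expansion. Since the relation is asymptotic as $\epsilon\nearrow1$ (so that $\log\epsilon\to0^-$), I would interpret $\approx$ as an asymptotic series in the small parameter $\log\epsilon$, valid on the appropriate domain of vectors $u$ for which the powers $A^k(fu)$ are defined.

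The main obstacle is the justification of the power-series expansion in the final step: the exponential series $\sum_k\frac{(\log\epsilon)^k}{k!}A^k$ involves the \emph{unbounded} operator $A$, so it does not converge in operator norm, and the expansion is only valid as an asymptotic (rather than convergent) series on a suitable dense domain --- for instance on finite linear combinations of the $l_n$, or more generally on vectors $fu$ lying in the domain of all powers of $A$. One must therefore be careful to state (\ref{tRJ}) as an asymptotic identity and to restrict to symbols $f$ and vectors $u$ for which $fu$ is sufficiently regular; the boundedness of $\tTe_f=\epsilon^A M_f$ itself, however, is unconditional and already guaranteed by the previous theorem.
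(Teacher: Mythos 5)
Your proposal is correct and follows essentially the same route as the paper: expand the kernel $\tLe$ to get $\tTe_f u=\sum_n\epsilon^n\spr{fu,l_n}l_n$, recognize this as the spectral formula $\epsilon^A(fu)$ via $Al_n=nl_n$, and then expand $\epsilon^A=e^{(\log\epsilon)A}$. Your added caution about the expansion of the exponential of the unbounded operator $A$ being only an asymptotic series on a suitable domain is a sensible refinement that the paper's terse proof leaves implicit.
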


\begin{proof}
We~have
\begin{align*}
\tTe_f u &= \int_0^\infty u(y) f(y) \sum_n \epsilon^n
  \overline{l_n(y)} l_n \, dy  \\
&= \sum_n \epsilon^n \spr{uf,l_n} l_n  \\
&= \sum_n \spr{uf,l_n} \epsilon^A l_n  \\
&= \epsilon^A(uf) = \sum_k \frac{(\log\epsilon)^k}{k!} \,A^k(uf) .
\end{align*}
\end{proof}

Of~course, using the familiar series
$$ \log\epsilon = -\sum_{j=1}^\infty \frac{(1-\epsilon)^j}j  $$
one could easily pass in (\ref{tRJ}) from powers of $\log\epsilon$ to powers
of $(1-\epsilon)$.

The~beginning of the asymptotic expansion (\ref{tRJ}) reads
$\tTe_fu=fu+(1-\epsilon)A(fu)+O((1-\epsilon)^2)$, or
\[ \tTe_f = M_f + (1-\epsilon)AM_f + O((1-\epsilon)^2) .  \label{tRK}  \]
Using the similar formulas for $g$ and $fg$ and subtracting, we~arrive~at
\[ \tTe_f \tTe_g - \tTe_{fg} = (1-\epsilon) M_fAM_g + O((1-\epsilon)^2) ,
 \label{tRL}  \]
and, upon a routine computation,
\[ \widetilde T_f \widetilde T_g - \widetilde T_g\widetilde T_f = (1-\epsilon)
 \big[ (x-1)(g(xf')'-f(xg')')I +2x(fg'-gf') D \big] + O((1-\epsilon)^2),
 \label{tRM}   \]
where we introduced the notation
$$ Du(x) := \frac{du(x)}{dx}   $$
for the differentiation operator on~$\RR$. Comparing these formulas with
(\ref{tTA}) and~(\ref{tTE}) --- the~role of the Planck constant being now
played by the quantity $1-\epsilon$ --- we~see that, first of~all, the~role
of the Poisson bracket is now played by the (second-order) expression
$g(xf')'-f(xg')'$; and, secondly, that in addition to the ``Toeplitz''
operators~$\tTe$, the~differentiation operator $D$ appears too.

As~with Hermite polynomials, one~also again has Hilbert spaces for which
$L_\epsilon$ and~$\tLe$ are the reproducing kernels:
\begin{align*}
 \LL_\epsilon &:= \{f=\sum_n f_n L_n: \sum_n \epsilon^{-n}|f_n|^2<\infty\} ,\\
 \widetilde\LL_\epsilon &:=
 \{f=\sum_n f_n l_n: \sum_n \epsilon^{-n}|f_n|^2<\infty\}.
\end{align*}
Here $\widetilde\LL_\epsilon$ is a space of functions on~$\RR_+$,
dense in $L^2(\RR_+)$. It~turns out that just as for the Hermite polynomials
in~\cite{AEH}, $\LL_\epsilon$ again extends to a space of holomorphic
functions on all of~$\CC$.

\begin{theorem}  \label{thmLag}
Each $f\in\LL_\epsilon$ extends to an entire function on~$\CC$,
and $\LL_\epsilon$ is the space of $($the~restrictions to~$\RR_+\!$~of\,$)$
holomorphic functions on $\CC$ with reproducing kernel
$$ L_\epsilon(x,y) = \sum_{n=0}^\infty \epsilon^n L_n(x) \overline{L_n(y)}
 = \frac1{1-\epsilon} e^{-\frac\epsilon{1-\epsilon}(x+\oy)}
 I_0\Big(\frac{2\sqrt{\epsilon x\oy}}{1-\epsilon}\Big), \qquad x,y\in\CC.  $$
\end{theorem}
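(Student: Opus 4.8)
The plan is to proceed in three stages: first establish that the series defining each $f\in\LL_\epsilon$ extends to an entire function, then identify $\LL_\epsilon$ as a reproducing kernel space and compute its kernel, and finally propagate the closed form from $\RR_+$ to all of $\CC$ by analytic continuation. For the first stage I would extract a growth bound on the Laguerre polynomials directly from the generating function~(\ref{tLG}): since $L_n(z)$ is the $n$-th Taylor coefficient of $t\mapsto(1-t)^{-1}e^{zt/(t-1)}$, Cauchy's estimate on a circle $|t|=r$ with $\sqrt\epsilon<r<1$ gives $|L_n(z)|\le(1-r)^{-1}r^{-n}e^{|z|r/(1-r)}$, using that $|t/(t-1)|\le r/(1-r)$ and $|1-t|\ge1-r$ on that circle. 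The constraint $r>\sqrt\epsilon$, available because $\epsilon<1$, is what will make the relevant series summable.

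With this bound in hand the entire extension is immediate. Membership $f=\sum_n f_nL_n\in\LL_\epsilon$ forces $|f_n|\le\|f\|\,\epsilon^{n/2}$ by Cauchy--Schwarz against the weight $\epsilon^{-n}$, so $\sum_n|f_n|\,|L_n(z)|\le\|f\|(1-r)^{-1}e^{|z|r/(1-r)}\sum_n(\sqrt\epsilon/r)^n$, which converges locally uniformly in $z\in\CC$ since $\sqrt\epsilon/r<1$; hence $f$ extends to an entire function. The same estimate applied to both factors shows that the double series $\sum_n\epsilon^nL_n(x)\overline{L_n(y)}$ converges locally uniformly on $\CC\times\CC$.

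Next I would identify the kernel. Because $\|L_n\|^2_{\LL_\epsilon}=\epsilon^{-n}$, the functions $\e_n:=\epsilon^{n/2}L_n$ form an orthonormal basis, and point evaluations are bounded: Cauchy--Schwarz yields $|f(z)|\le\|f\|\big(\sum_n\epsilon^n|L_n(z)|^2\big)^{1/2}=\|f\|\,L_\epsilon(z,z)^{1/2}$, finite by the previous step. Thus $\LL_\epsilon$ is a reproducing kernel Hilbert space, and its kernel is $\sum_n\e_n(x)\overline{\e_n(y)}=\sum_n\epsilon^nL_n(x)\overline{L_n(y)}$. Equivalently, one checks the reproducing property directly: the element $L_\epsilon(\cdot,y)$ has $L_n$-coefficients $\epsilon^n\overline{L_n(y)}$, so $\spr{f,L_\epsilon(\cdot,y)}=\sum_n\epsilon^{-n}f_n\cdot\epsilon^nL_n(y)=f(y)$, while $\|L_\epsilon(\cdot,y)\|^2=\sum_n\epsilon^n|L_n(y)|^2=L_\epsilon(y,y)<\infty$.

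Finally, for the closed form I would use that $L_n$ has real coefficients, whence $\overline{L_n(y)}=L_n(\oy)$ and $L_\epsilon(x,y)=\sum_n\epsilon^nL_n(x)L_n(\oy)$ is jointly holomorphic in $(x,\oy)$ on $\CC^2$. The right-hand side is likewise entire in its two arguments, since $I_0$ is even and $I_0\big(2\sqrt{\epsilon x w}/(1-\epsilon)\big)=\sum_k(\epsilon x w)^k/\big((1-\epsilon)^{2k}(k!)^2\big)$ is an honest power series in the product $xw$. The two sides agree for $x,w\in\RR_+$ by~(\ref{tRPx}), so fixing $w>0$ and varying $x$, then fixing $x\in\CC$ and varying $w$, the identity theorem forces agreement throughout $\CC^2$; putting $w=\oy$ gives the stated formula. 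The main obstacle is the very first stage: securing a growth bound on $|L_n(z)|$ that is uniform over the whole plane yet locally controlled in $z$. Once that generating-function estimate is established, the entire extension, the RKHS property, the kernel computation, and the analytic continuation all follow routinely.
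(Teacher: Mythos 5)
Your proposal is correct and follows essentially the same route as the paper: the Cauchy estimate $r^n|L_n(x)|\le(1-r)^{-1}e^{|x|r/(1-r)}$ from the generating function with $r\in(\sqrt\epsilon,1)$, locally uniform convergence of $\sum_nf_nL_n$, and then the standard reproducing-kernel and analytic-continuation arguments (which the paper simply defers to the proof of Theorem~2 of the Hermite predecessor paper, and which you spell out correctly). The only cosmetic difference is that you bound the coefficients termwise by $|f_n|\le\|f\|\epsilon^{n/2}$ where the paper applies Cauchy--Schwarz to the whole sum; both give the needed convergence.
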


\begin{proof}
By~(\ref{tLG}) and Cauchy estimates, we~have for each $0<r<1$ and $x\in\CC$
\[ r^n|L_n(x)| \le \sup_{|z|=r} \Big| \frac1{1-z} e^{\frac{xz}{z-1}} \Big|
 \le \frac1{1-r} e^{\frac{|x|r}{1-r}}.    \label{tLE}  \]
Thus
\begin{align*}
\sum_n |f_n L_n(x)| &\le \frac{e^{|x|r/(1-r)}}{1-r} \sum_n |f_n r^{-n}|  \\
&\le \frac{e^{|x|r/(1-r)}}{1-r} \;\|f\|_{\LL_\epsilon}
 \Big(\sum_n \epsilon^n r^{-2n} \Big)^{1/2}   \\
&\le \frac{e^{|x|r/(1-r)}}{1-r} \;\|f\|_{\LL_\epsilon}
 \frac r{\sqrt{r^2-\epsilon}}   \end{align*}
whenever $r\in(\sqrt\epsilon,1)$. Thus the series
$$ f(x) = \sum_n f_n L_n(x)   $$
converges for any $x\in\CC$, and uniformly on compact subsets.
The~rest follows as in the proof of Theorem~2 in~\cite{AEH}.   \end{proof}

\section{The Laguerre Fock space}
It~turns out that the Laguerre polynomials also satisfy an orthogonality
relation over the complex plane, similarly to (\ref{tVI}) for the Hermite
polynomials.

Recall that the modified Bessel function of the third kind $K_0$ is defined~by
\[ K_0(t) = \int_1^\infty \frac{e^{-tx}} {\sqrt{x^2-1}} \, dx,
 \qquad \Re t>0   \label{KNT}   \]
(see~\cite[7.12(19)]{BE}). One~has $K_0(t)\sim\log\frac1t$ as $t\searrow0$,
while
\[ K_0(t)\sim \sqrt{\frac\pi{2t}} \,e^{-t}
 \quad\text{as }t\to+\infty.   \label{tKE}   \]

\begin{lemma} \label{leBessel}
For any $k=0,1,2,\dots$,
\[ \int_0^\infty 2 t^k \, K_0(2\sqrt t) \, dt = k!^2.  \label{tKA}  \]
\end{lemma}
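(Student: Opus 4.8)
The plan is to substitute the integral representation (\ref{KNT}) for $K_0$ into the left-hand side of (\ref{tKA}) and then interchange the two integrations. Writing $K_0(2\sqrt t)=\int_1^\infty (x^2-1)^{-1/2}e^{-2\sqrt t\,x}\,dx$, the integrand of the resulting double integral is nonnegative on $\RR_+\times(1,\infty)$, so Tonelli's theorem permits swapping the order of integration without any convergence worries:
$$ \int_0^\infty 2t^k\,K_0(2\sqrt t)\,dt = \int_1^\infty \frac1{\sqrt{x^2-1}} \Big(\int_0^\infty 2t^k e^{-2\sqrt t\,x}\,dt\Big)\,dx. $$

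First I would dispose of the inner integral. The substitution $t=u^2$ turns it into an ordinary Gamma integral,
$$ \int_0^\infty 2t^k e^{-2\sqrt t\,x}\,dt = 4\int_0^\infty u^{2k+1}e^{-2ux}\,du = \frac{(2k+1)!}{2^{2k}\,x^{2k+2}}, $$
so that the whole expression collapses to
$$ \int_0^\infty 2t^k\,K_0(2\sqrt t)\,dt = \frac{(2k+1)!}{2^{2k}}\int_1^\infty \frac{dx}{x^{2k+2}\sqrt{x^2-1}}. $$
It then remains to evaluate this last one-dimensional integral. The substitution $x=\sec\theta$ reduces it to a Wallis integral,
$$ \int_1^\infty \frac{dx}{x^{2k+2}\sqrt{x^2-1}} = \int_0^{\pi/2}\cos^{2k+1}\theta\,d\theta = \frac{(2k)!!}{(2k+1)!!} = \frac{2^{2k}(k!)^2}{(2k+1)!}, $$
and plugging this back in cancels the prefactor and leaves exactly $k!^2$, as claimed. (Alternatively, the substitution $w=x^{-2}$ exhibits the same integral as $\tfrac12 B(k+1,\tfrac12)$, which one evaluates by the standard Beta/Gamma formula.)

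I do not expect any genuine obstacle here: every step is an elementary, absolutely convergent manipulation, the only points deserving a word of care being the justification of the interchange of integrals (furnished by positivity via Tonelli) and the bookkeeping of the double-factorial identity $(2k)!!/(2k+1)!! = 2^{2k}(k!)^2/(2k+1)!$. A slicker but less self-contained route would be to quote the Mellin transform $\int_0^\infty 2t^{s-1}K_0(2\sqrt t)\,dt = \Gamma(s)^2$ (valid for $\Re s>0$) and specialize to $s=k+1$; the computation above is in effect a hands-on verification of that identity at integer values of~$s$.
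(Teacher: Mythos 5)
Your proof is correct and follows essentially the same route as the paper's: substitute the integral representation \eqref{KNT} for $K_0$, interchange the integrations, evaluate the inner integral as a Gamma integral, and reduce the remaining $x$-integral to a Beta-type integral (the paper does this via the substitution $x\to s^{-1/2}$ and the duplication formula, you via $x=\sec\theta$ and the Wallis/double-factorial identity, which amounts to the same computation). The only cosmetic difference is the order of the preliminary substitutions, and your explicit appeal to Tonelli is a slight tightening of the paper's bare invocation of Fubini.
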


\begin{proof}
Making the indicated changes of variables and using Fubini,
\begin{align*}
& \int_0^\infty 2t^k \, K_0(2\sqrt t) \, dt =  \qquad(t\to\tfrac{t^2}4)  \\
&\qquad = \int_0^\infty t^{2k+1} 2^{-2k} K_0(t) \, dt  \\
&\qquad = \int_1^\infty \int_0^\infty t^{2k+1} 2^{-2k} \;
 \frac{e^{-tx}}{\sqrt{x^2-1}} \, dt \, dx \qquad(t\to\tfrac tx)   \\
&\qquad = \int_1^\infty 2^{-2k} x^{-2k-2} \frac{\Gamma(2k+2)}{\sqrt{x^2-1}}\,dx
 \qquad(x\to s^{-1/2})  \\
&\qquad = \int_0^1 2^{-2k} s^{k+1} \Gamma(2k+2) \sqrt{\frac s{1-s}}
 \; \frac{ds}{2s\sqrt s}  \\
&\qquad = 2^{-2k-1} \Gamma(2k+2) \frac{\Gamma(\frac12)k!}{\Gamma(k+\frac32)}
 = k!^2  \end{align*}
by~the doubling formula for the Gamma function.   \end{proof}

\begin{remark*}
Another way to arrive at (\ref{tKA}) is the following: starting with
the double integral
$$ k!^2 = \int_0^\infty \int_0^\infty e^{-x} e^{-y} x^k y^k \,dx \,dy ,  $$
we~make the change of variable $s=x+y$, $p=xy$, so $ds\,dp=|x-y|\,dx\,dy=
\sqrt{s^2-4p}\;dx\,dy$. This yields
\begin{align*}
k!^2 &= 2\int_0^\infty p^k \int_{2\sqrt p}^\infty e^{-s} \,
 \frac{ds}{\sqrt{s^2-4p}} \, dp    \\
&= 2\int_0^\infty p^k K_0(2\sqrt p) \, dp
\end{align*}
by~the change of variable $s\to2s\sqrt p$. Note that $2K_0(2\sqrt t)$ is~the
unique function whose moments are given by~(\ref{tKA}), in~view of Carleman's
criterion \cite[p.~85]{Akhi}, since $\sum_k k!^{-1/k}=\infty$ by Stirling's
formula.

Iterating the above argument, it~is also clear how to construct functions
on~$\RR_+$ whose moments will be~$k!^3$, or~$k!^4$, and so~forth.   \qed
\end{remark*}

Our~main result in this section is the following.

\begin{theorem} Let $0<\epsilon<1$ and denote for brevity
\[ c = \frac\epsilon{1-\epsilon}.   \label{tKF}  \]
Then for $m,n=0,1,2,\dots$,
\[ \intC L_n(z) \overline{L_m(z)} \, e^{cz+c\oz} \,
 K_0\Big(\frac{2\sqrt\epsilon}{1-\epsilon}|z|\Big) \, dz
 = \frac\pi{2c}\,\epsilon^{-n}\,\delta_{mn}.   \label{tKD}  \]
\end{theorem}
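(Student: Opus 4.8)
The~plan is to pass to generating functions. Multiplying the asserted identity~(\ref{tKD}) by $s^n\overline t^m$ and summing over $m,n\ge0$, and using that $\overline{L_m(z)}=L_m(\oz)$ (the~$L_m$ having real coefficients) together with the generating function~(\ref{tLG}), the~whole theorem becomes equivalent, with $s$ and $\overline t$ regarded as independent small complex parameters, to the single identity
$$ \frac1{(1-s)(1-\overline t)}\intC \exp\!\Big(\frac{zs}{s-1}+\frac{\oz\,\overline t}{\overline t-1}+cz+c\oz\Big)\,K_0\!\Big(\frac{2\sqrt\epsilon}{1-\epsilon}|z|\Big)\,dz =\frac{\pi(1-\epsilon)}{2(\epsilon-s\overline t)}, $$
whose right-hand side is $\frac\pi{2c}\sum_n(s\overline t/\epsilon)^n$ (recall $\epsilon/c=1-\epsilon$); comparison of the Taylor coefficients in $s$ and $\overline t$ then recovers~(\ref{tKD}). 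The~exponent collapses to $az+b\oz$ with
$$ a=c-\frac s{1-s}=\frac{\epsilon-s}{(1-\epsilon)(1-s)}, \qquad b=c-\frac{\overline t}{1-\overline t}=\frac{\epsilon-\overline t}{(1-\epsilon)(1-\overline t)}. $$

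The~heart of the argument is the auxiliary formula
$$ \intC e^{az+b\oz}\,K_0(\lambda|z|)\,dz=\frac{2\pi}{\lambda^2-4ab}, \qquad \lambda=\frac{2\sqrt\epsilon}{1-\epsilon}, $$
valid whenever $4|ab|<\lambda^2$. To~establish it I~would pass to polar coordinates $z=re^{i\theta}$, $dz=r\,dr\,d\theta$, and carry out the angular integration via $\int_0^{2\pi}e^{are^{i\theta}+bre^{-i\theta}}\,d\theta=2\pi I_0(2r\sqrt{ab})$ (obtained by expanding the two exponentials and retaining only the matching Fourier frequencies). This~reduces the integral to $2\pi\int_0^\infty r\,I_0(2r\sqrt{ab})\,K_0(\lambda r)\,dr$. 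Expanding $I_0(2r\sqrt{ab})=\sum_j(ab)^jr^{2j}/j!^2$ and integrating term by term with the moment identity $\int_0^\infty r^{2j+1}K_0(\lambda r)\,dr=4^jj!^2/\lambda^{2j+2}$ --- which is exactly Lemma~\ref{leBessel} after the substitution $t\mapsto\lambda^2r^2/4$ --- yields $2\pi\sum_j(4ab)^j/\lambda^{2j+2}=2\pi/(\lambda^2-4ab)$, the~geometric series converging precisely when $4|ab|<\lambda^2$.

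It~then remains to insert the above values of $a,b,\lambda$ and simplify. A~direct computation gives $\lambda^2-4ab=4(\epsilon-s\overline t)/[(1-\epsilon)(1-s)(1-\overline t)]$, whence $2\pi/(\lambda^2-4ab)=\pi(1-\epsilon)(1-s)(1-\overline t)/[2(\epsilon-s\overline t)]$; multiplying by the prefactor $1/[(1-s)(1-\overline t)]$ cancels the two linear factors and produces exactly the right-hand side displayed above. The~step needing most care is the justification of the two interchanges --- summation against integration in the reduction to the generating-function identity, and the term-by-term radial integration. Both are legitimate for $s,\overline t$ in a neighbourhood of the origin, where $4|ab|<\lambda^2$ (indeed $4ab/\lambda^2\to\epsilon<1$ as $s,\overline t\to0$), so that every integral and series in sight converges absolutely; an~appeal to Fubini, followed by the observation that both sides are holomorphic in $(s,\overline t)$ there, then legitimizes the coefficient comparison, and the remaining work is the elementary algebra indicated.
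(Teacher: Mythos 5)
Your argument is correct and is essentially the paper's own proof: both pass to the double generating function in $s$ and $\overline t$, reduce the integral to the moments $\int_0^\infty t^k K_0(2\sqrt t)\,dt=k!^2/2$ of Lemma~\ref{leBessel} via the angular orthogonality of $z^j\oz^k$, resum the resulting geometric series, and compare coefficients, with the same convergence caveats for $s,\overline t$ near the origin. The only difference is organizational --- you isolate the closed form $\intC e^{az+b\oz}K_0(\lambda|z|)\,dz=2\pi/(\lambda^2-4ab)$ as a standalone step, whereas the paper carries the same expansion through in one chain of equalities.
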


\begin{proof} By~virtue of the last lemma,
$$ \int_0^\infty t^k \, K_0\Big(\frac{2\sqrt\epsilon}{1-\epsilon}\sqrt t\Big)
 \,dt = \frac12 \Big(\frac{1-\epsilon}{\sqrt\epsilon}\Big)^{2k+2} k!^2.  $$
Using the generating function~(\ref{tLG}), we~thus have
\begin{align}
& \sum_{m,n=0}^\infty t^n s^m \intC L_n(z) \overline{L_m(z)} e^{cz+c\oz}
 K_0(\tfrac{2\sqrt\epsilon}{1-\epsilon}|z|) \, dz   \nonumber  \\
&\qquad = \frac1{(1-t)(1-s)} \intC
 e^{\frac{zt}{t-1}+\frac{\oz s}{s-1} + cz+c\oz} \,
 K_0(\tfrac{2\sqrt\epsilon}{1-\epsilon}|z|) \, dz   \nonumber  \\
&\qquad = \frac1{(1-t)(1-s)} \sum_{j,k=0}^\infty
 \Big(\frac t{t-1}+c\Big)^j \Big(\frac s{s-1}+c\Big)^k
 \intC \frac{z^j}{j!} \frac{\oz^k}{k!} \,
 K_0(\tfrac{2\sqrt\epsilon}{1-\epsilon}|z|) \, dz   \nonumber  \\
&\qquad = \frac1{(1-t)(1-s)} \sum_{j,k=0}^\infty
 \Big(\frac t{t-1}+c\Big)^j \Big(\frac s{s-1}+c\Big)^k
 \frac{\delta_{jk}\pi}{j!k!} \int_0^\infty t^k  \,
 K_0(\tfrac{2\sqrt\epsilon}{1-\epsilon} \sqrt t) \, dt   \nonumber  \\
&\qquad = \frac1{(1-t)(1-s)} \sum_{j,k=0}^\infty
 \Big(\frac t{t-1}+c\Big)^j \Big(\frac s{s-1}+c\Big)^k
 \frac\pi2 \delta_{jk} \Big(\frac{1-\epsilon}{\sqrt\epsilon}\Big)^{2k+2}
  \nonumber  \\
&\qquad = \frac{(1-\epsilon)^2\pi/(2\epsilon)}{(1-t)(1-s)}
 \Big[1-\frac{(1-\epsilon)^2}\epsilon \Big(\frac t{t-1}+c\Big)
 \Big(\frac s{s-1}+c\Big) \Big]^{-1},   \label{tKC}
\end{align}
where the interchange of the integration and summation in the first equality
is legitimate for
$$ \Big|\frac t{t-1}\Big|+c < \frac{\sqrt\epsilon}{1-\sqrt\epsilon}, \qquad
 \Big|\frac s{s-1}\Big|+c < \frac{\sqrt\epsilon}{1-\sqrt\epsilon}   $$
--- hence, for~$t,s$ in some neighbourhood of zero --- thanks to (\ref{tLE})
and~(\ref{tKE}). Now
\begin{align*}
& (1-t)(1-s)\Big[1-\frac{(1-\epsilon)^2}\epsilon \Big(\frac t{t-1}+c\Big)
 \Big(\frac s{s-1}+c\Big) \Big]  =  \\
&\qquad = (t-1)(s-1) - \frac{(1-\epsilon)^2}\epsilon(t+tc-c)(s+sc-c)  \\
&\qquad = \Big[1-c^2\frac{(1-\epsilon)^2}\epsilon\Big]
 + \Big[c(1+c)\frac{(1-\epsilon)^2}\epsilon-1\Big](t+s)
 + \Big[1-(1-c)^2\frac{(1-\epsilon)^2}\epsilon\Big] st   \\
&\qquad = (1-\epsilon) - \frac{1-\epsilon}\epsilon st   \\
&\qquad = (1-\epsilon) \Big(1-\frac{st}\epsilon\Big)
\end{align*}
by~(\ref{tKF}). Thus (\ref{tKC}) equals
$$ \frac{(1-\epsilon)\pi/(2\epsilon)} {1-\frac{ts}\epsilon}
 = \frac{(1-\epsilon)\pi}{2\epsilon}
 \sum_{k=0}^\infty \frac{(ts)^k}{\epsilon^k}  $$
and (\ref{tKD}) follows.   \end{proof}

\begin{corollary}
The~multiplication operator
$$ M_L: f(z) \longmapsto e^{\frac\epsilon{1-\epsilon}z} f(z)    $$
maps the space $\LL_\epsilon$ unitarily onto the space
$$ \LGe := L^2\hol(\CC,d\nu_\epsilon) $$
of entire functions on $\CC$ square-integrable with respect to the measure
\[ \postdisplaypenalty1000000
 d\nu_\epsilon(z) := \frac{2\epsilon}{(1-\epsilon)\pi} \,
 K_0\Big(\frac{2\sqrt\epsilon}{1-\epsilon}|z|\Big) \, dz ,  \label{NUE} \]
where $dz$ stands for the Lebesgue area measure.
\end{corollary}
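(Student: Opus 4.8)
The plan is to read off from the preceding Theorem that $M_L$ is an isometry, and then to spend the real effort on surjectivity. First I would rewrite the Theorem in the form adapted to the measure $d\nu_\epsilon$: since $e^{cz+c\oz}=|e^{cz}|^2$ and $d\nu_\epsilon=\frac{2c}{\pi}K_0(\cdots)\,dz$ (with $c$ as in~\eqref{tKF}), the orthogonality relation~\eqref{tKD} says precisely that the functions $e^{cz}L_n(z)$ are mutually orthogonal in $\LGe$ with $\|e^{cz}L_n\|_{\LGe}^2=\epsilon^{-n}$. As $\{L_n\}$ is an orthogonal basis of $\LL_\epsilon$ with exactly the same norms $\|L_n\|_{\LL_\epsilon}^2=\epsilon^{-n}$, the assignment $L_n\mapsto e^{cz}L_n$ preserves the inner product on the (dense) span of the Laguerre polynomials and therefore extends to an isometry of $\LL_\epsilon$ into $\LGe$; the routine check that this isometric extension really is multiplication by $e^{cz}$ and lands inside $\LGe$ (using that the Laguerre partial sums converge both in norm and, by the estimates of Theorem~\ref{thmLag}, locally uniformly) goes exactly as in~\cite{AEH}.

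The remaining, and genuinely substantial, point is that $M_L$ is onto. Since its range is closed (the isometric image of a Hilbert space), this amounts to density, i.e.\ to completeness of the system $\{e^{cz}L_n\}$ in $\LGe$; and because $L_n$ has degree exactly $n$, $\operatorname{span}\{e^{cz}L_n:n\le N\}=\operatorname{span}\{e^{cz}z^k:k\le N\}$, so it is equivalent to show the twisted monomials $\{e^{cz}z^k\}_{k\ge0}$ complete. So I would take $g\in\LGe$ with $\spr{g,e^{cz}z^k}_{\LGe}=0$ for all $k$, write the entire function $g$ as $g(z)=\sum_m a_m z^m$, and expand also $e^{c\oz}=\sum_j c^j\oz^j/j!$. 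Exploiting that the weight $K_0(\frac{2\sqrt\epsilon}{1-\epsilon}|z|)$ is radial, so that $\int_\CC z^m\oz^\ell K_0\,dz=\delta_{m\ell}\beta_m$ with $\beta_m=\frac\pi2\rho^{2m+2}(m!)^2$ (here $\rho:=\frac{1-\epsilon}{\sqrt\epsilon}$, the value of $\beta_m$ being a one-line consequence of the substitution $t=|z|^2$ and Lemma~\ref{leBessel}), the vanishing inner products collapse to the relations
\[ \sum_{j\ge0}\frac{c^j}{j!}\,\beta_{k+j}\,a_{k+j}=0,\qquad k=0,1,2,\dots, \label{star} \]
the interchange of summation and integration being justified by local uniform convergence of the series together with the exponential decay~\eqref{tKE} of $K_0$.

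The crux is to deduce $a_m\equiv0$ from~\eqref{star}. Setting $d_m:=\beta_m a_m$, the hypothesis $g\in\LGe$ gives $\sum_m|a_m|^2\gamma_m<\infty$ with $\gamma_m=\|z^m\|_{\LGe}^2$ a fixed positive multiple of $\rho^{2m}(m!)^2$; hence $|a_m|\lesssim\rho^{-m}(m!)^{-1}$ and therefore $|d_m|\lesssim\rho^m\,m!$. Consequently the exponential generating function $H(w):=\sum_m\frac{d_m}{m!}\,w^m$ is holomorphic on the disc $|w|<1/\rho=\frac{\sqrt\epsilon}{1-\epsilon}$, and a direct differentiation shows that $H^{(k)}(c)=\sum_{j\ge0}\frac{c^j}{j!}d_{k+j}$ is exactly the left-hand side of~\eqref{star}. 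Thus all derivatives of $H$ vanish at the point $c=\frac\epsilon{1-\epsilon}$, which —\,and this is the one place where the specific constants matter\,— lies strictly inside the disc of holomorphy because $\epsilon<\sqrt\epsilon$ for $0<\epsilon<1$. An analytic function all of whose derivatives vanish at an interior point is identically zero, so $H\equiv0$, whence $d_m=0$, $a_m=0$, and $g=0$. This establishes completeness, hence surjectivity, and the Corollary follows.

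I expect the main obstacle to be precisely this last step: the plain monomials $z^k$, which form the natural complete orthogonal system of the radial space $\LGe$, are \emph{not} the images $e^{cz}z^k$ under $M_L$, so one cannot simply invoke completeness of monomials; one must instead handle the twisted system directly, and the workable device is the generating function $H$, whose radius of holomorphy $1/\rho$ must —\,and only just does\,— exceed the evaluation point $c$.
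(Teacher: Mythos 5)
Your argument is correct, and on the one genuinely substantive point --- surjectivity --- it takes a different route from the paper. The paper states the Corollary with no explicit proof: the isometry is read off from \eqref{tKD} exactly as you do, and the ``onto'' part is implicitly handled by the reproducing--kernel formalism spelled out in the paragraph \emph{following} the Corollary, namely that Lemma~\ref{leBessel} exhibits the normalized monomials as an orthonormal basis of $\LGe$ (completeness of monomials being standard for a radial weight with finite moments), so that the kernel of $\LGe$ is $\frac1{1-\epsilon}I_0\big(\frac{2\sqrt{z\ow\epsilon}}{1-\epsilon}\big)$, which coincides with $e^{cz}L_\epsilon(z,w)e^{c\ow}$, the kernel of the (closed) range of $M_L$; a closed subspace sharing the reproducing kernel of the ambient space must be the whole space. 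Your proof instead attacks completeness of the twisted system $\{e^{cz}z^k\}$ head-on via the exponential generating function $H$, and the numerics work out: $|a_m|\lesssim\rho^{-m}/m!$ gives $H$ holomorphic on $|w|<1/\rho=\frac{\sqrt\epsilon}{1-\epsilon}$, the relations are exactly $H^{(k)}(c)=0$, and $c=\frac\epsilon{1-\epsilon}<\frac{\sqrt\epsilon}{1-\epsilon}$ since $\epsilon<\sqrt\epsilon$, so the identity theorem applies; the absolute convergence needed for the Fubini step and for identifying $\sum_j c^jd_{k+j}/j!$ with $H^{(k)}(c)$ is controlled by $c\rho=\sqrt\epsilon<1$, as you indicate. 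What the paper's (implicit) route buys is brevity --- everything reduces to the standard kernel transformation formula $K_2=\phi K_1\overline\phi$ plus the radial-measure monomial basis; what yours buys is a self-contained, hands-on verification that does not presuppose the identification of the reproducing kernel of $\LGe$, at the price of the delicate (but correct) observation that the evaluation point $c$ only just lies inside the disc of holomorphy of $H$.
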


The~space $\LGe=L^2\hol(\CC,d\nu_\epsilon)$ thus plays an analogous role
for the Laguerre polynomials as the Fock space $\FFe$ played for the
Hermite polynomials; we~will call $\LGe$ the \emph{Laguerre Fock space}.

The~last corollary and (\ref{tRPx}) imply that the reproducing kernel of
$\LGe$ is equal~to
$$ \frac1{1-\epsilon} I_0\Big(\frac{2\sqrt{x\oy\epsilon}}{1-\epsilon}\Big),  $$
which can be verified also directly using the monomial basis.
(Namely, quite generally, if a multiplication operator $M_\phi:f\mapsto\phi f$
is unitary from a reproducing kernel Hilbert space $\mathcal H_1$ into another
reproducing kernel Hilbert space~$\mathcal H_2$, then the corresponding
reproducing kernels are related~by $K_2(x,y)=\phi(x)K_1(x,y)\overline
{\phi(y)}$; this is immediate e.g.~from the standard formula $K(x,y)=\sum_j
e_j(x)\overline{e_j(y)}$ for reproducing kernel in terms of an arbitrary
orthonormal basis~$\{e_j\}$. As~for the second claim, Lemma~\ref{leBessel}
shows that $\{\frac{\epsilon^{j/2}}{(1-\epsilon)^{j+\frac12}}z^j\}_{j=0}
^\infty$ is an orthonormal basis in~$\LGe$, and the claim follows again by
the formula just mentioned.)

So~far we have worked with the ordinary Laguerre polynomials~$L_n(x)$;
it~should be noted, however, that everything we did in this section extends
in a routine manner also to the generalized Laguerre polynomials~$\La(x)$,
$\alpha>-1$, defined~by
$$ \La_n(x) = \frac{e^x x^{-\alpha}}{n!}\frac{d^n}{dx^n}x^{n+\alpha} e^{-x}. $$
They are orthogonal on the half-line $\RR_+=(0,+\infty)$ with respect to the
weight~$e^{-x}x^\alpha$
$$ \int_0^\infty \La_n(x) \La_m(x) \, e^{-x}x^\alpha \,dx =
 \frac{\Gamma(n+\alpha+1)}{n!} \delta_{mn},   $$
and can also be obtained from the generating function
$$ \sum_{n=0}^\infty \La_n(x) z^n = \frac1{(1-z)^{\alpha+1}}
 e^{\frac{xz}{z-1}}, \qquad x\in\CC, |z|<1 .    $$
The~ordinary Laguerre polynomials correspond to $\alpha=0$.
Similarly to our Lemma~4, one~checks that the modified Bessel functions
of the third kind
$$ K_\alpha(t) = \frac{\Gamma(\frac12)}{\Gamma(\frac12+\alpha)}
 \Big(\frac t2\Big)^\alpha
\int_1^\infty e^{-tx} (x^2-1)^{\alpha-\frac12} \, dx,
 \qquad \Re t>0, \; \alpha>-\tfrac12   $$
satisfies
\[ \int_0^\infty 2 t^{k+\frac\alpha2} \, K_\alpha(2\sqrt t) \, dt
 = k! \Gamma(k+\alpha+1).    \label{tQQ}  \]
{\allowdisplaybreaks[3]  \hfuzz50pt
With the $c$ from (\ref{tKF}), the computation
\begin{align*}
& \sum_{m,n=0}^\infty t^n s^m \intC \La_n(z) \overline{\La_m(z)} e^{cz+c\oz}
 |z|^\alpha K_\alpha(\tfrac{2\sqrt\epsilon}{1-\epsilon}|z|) \, dz  \\
&\qquad = \frac1{(1-t)^{\alpha+1}(1-s)^{\alpha+1}} \intC
 e^{\frac{zt}{t-1}+\frac{\oz s}{s-1} + cz+c\oz} \, |z|^\alpha
 K_\alpha(\tfrac{2\sqrt\epsilon}{1-\epsilon}|z|) \, dz   \\
&\qquad = \frac1{(1-t)^{\alpha+1}(1-s)^{\alpha+1}} \sum_{j,k=0}^\infty
 \Big(\frac t{t-1}+c\Big)^j \Big(\frac s{s-1}+c\Big)^k
 \intC \frac{z^j}{j!} \frac{\oz^k}{k!} \, |z|^\alpha
 K_\alpha(\tfrac{2\sqrt\epsilon}{1-\epsilon}|z|) \, dz   \\
&\qquad = \frac1{(1-t)^{\alpha+1}(1-s)^{\alpha+1}} \sum_{j,k=0}^\infty
 \Big(\frac t{t-1}+c\Big)^j \Big(\frac s{s-1}+c\Big)^k
 \frac{\delta_{jk}\pi}{j!k!} \int_0^\infty t^{k+\frac\alpha2}  \,
 K_\alpha(\tfrac{2\sqrt\epsilon}{1-\epsilon} \sqrt t) \, dt   \\
&\qquad = \frac1{(1-t)^{\alpha+1}(1-s)^{\alpha+1}} \sum_{j,k=0}^\infty
 \Big(\frac t{t-1}+c\Big)^j \Big(\frac s{s-1}+c\Big)^k
 \frac\pi2 \delta_{jk} \frac{\Gamma(k+\alpha+1)}{k!}
 \Big(\frac{1-\epsilon}{\sqrt\epsilon}\Big)^{2k+2+\alpha}  \\
&\qquad = \Big(\frac{1-\epsilon}{\sqrt\epsilon}\Big)^{2+\alpha}
 \frac{\Gamma(\alpha+1)\pi/2}{(1-t)^{\alpha+1}(1-s)^{\alpha+1}}
 \Big[1-\frac{(1-\epsilon)^2}\epsilon \Big(\frac t{t-1}+c\Big)
 \Big(\frac s{s-1}+c\Big) \Big]^{-\alpha-1}   \\
&\qquad = \Big(\frac{1-\epsilon}{\sqrt\epsilon}\Big)^{2+\alpha}
 \frac{\Gamma(\alpha+1)\pi}2
 (1-\epsilon)^{-\alpha-1} \Big(1-\frac{st}\epsilon\Big)^{-\alpha-1}   \\
&\qquad = \frac{(1-\epsilon)\Gamma(\alpha+1)\pi}
 {2\epsilon^{1+\frac\alpha2}(1-\frac{ts}\epsilon)^{\alpha+1}}   \\
&\qquad = \frac{(1-\epsilon)\pi} {2\epsilon^{1+\frac\alpha2}}
 \sum_{k=0}^\infty \frac{\Gamma(\alpha+k+1)}{k!} \frac{(ts)^k}{\epsilon^k}
\end{align*}
}shows as before that
$$ \intC \La_n(z) \overline{\La_m(z)} e^{cz+c\oz}
 |z|^\alpha K_\alpha(\tfrac{2\sqrt\epsilon}{1-\epsilon}|z|) \, dz
 = \frac{(1-\epsilon)\pi} {2\epsilon^{1+\frac\alpha2}}
 \frac{\Gamma(n+\alpha+1)}{n!} \delta_{mn} \epsilon^{-n} ,   $$
generalizing~(\ref{tKD}). The~same multiplication operator as before
$$ M_L: f(z) \longmapsto e^{\frac\epsilon{1-\epsilon}z} f(z)    $$
thus maps the space
$$ \LLa_\epsilon = \{ f=\sum_n f_n L_n: \sum_n \tfrac{\Gamma(n+\alpha+1)}{n!}
 \epsilon^{-n} |f_n|^2 =: \|f\|_{\LLa_\epsilon}^2 < \infty \}  $$
unitarily onto the space
$$ \LGea := L^2\hol(\CC,d\nua_\epsilon) $$
of entire functions on $\CC$ square-integrable with respect to the measure
$$ d\nua_\epsilon(z) := \frac{2\epsilon^{1+\frac\alpha2}}
 {(1-\epsilon)\pi} \, |z|^\alpha
 K_\alpha\Big(\frac{2\sqrt\epsilon}{1-\epsilon}|z|\Big) \, dz .   $$
The~reproducing kernel of $\LGea$ equals (cf.~\cite[\S10.12(20)]{BE})
$$ \frac1{1-\epsilon} (x\oy\epsilon)^{-\alpha/2}
I_\alpha \Big(\frac{2\sqrt{x\oy\epsilon}}{1-\epsilon}\Big),  $$
where $I_\alpha$ again denotes the modified Bessel function of the first kind.
In~particular, each $f\in\LLa_\epsilon$ again actually extends to an entire
function on~$\CC$, and $\LLa_\epsilon$ is the space of (the~restrictions
to~$\RR_+\!$~of) holomorphic functions on $\CC$ with reproducing kernel
$\frac1{1-\epsilon} e^{-\frac\epsilon{1-\epsilon}(x+\oy)}(x\oy\epsilon)
^{-\alpha/2} I_\alpha\Big(\frac{2\sqrt{\epsilon x\oy}}{1-\epsilon}\Big)$.

Remarkably, the~space $\LGea$ is a very well-known object, which first appeared
in Section~VI of the paper by Barut and Girardello \cite{BaG} on coherent
states associated with $SU(1,1)$; cf.~the~formulas (6.2) and (6.3) there.
(Our~$\alpha$ corresponds to $-2\Phi-1$ in the notation of~\cite{BaG};
recall that the Bessel function satisfies $K_\nu=K_{-\nu}$ for any~$\nu$.
Also we note that our Lemma~4 and (\ref{tQQ}) are just a special case of the
formula (3.26) there, however we have included the simple direct verification
here for convenience.) It~is noteworthy that Laguerre polynomials turn out
to be related to this space of Barut and Girardello in the same way as
Hermite polynomials were shown in \cite{AliKr} to be related to the
standard Fock-Segal-Bargmann space.
More recently, the space $\LGea$ has been studied in some detail
in~\cite{karp}. Another interesting point in this connection is the
existence of families of complex orthogonal polynomials in $z, \oz$,
with real coefficients, which span, for~example, the space $L^2(\CC,
d\mu_\epsilon)$, of~which $\FFe = L^2\hol(\CC, d\mu_\epsilon)$ is a subspace.
These polynomials are also known as complex Hermite polynomials
(see, e.g.,~\cite{ghan}), and~are determined completely by the 
measure~$d\mu_\epsilon$. A general procedure for constructing such
a family of polynomials, starting with a measure, has been developed
in~\cite{iszeng,iszhang}. It would be interesting to work out the
analogous complex orthogonal polynomials starting with the
measure~$d\nua_\epsilon$.

\section{The Laguerre ``squeeze'' operator}
In~\cite{AliKr}, it~was shown that the Hermite polynomial basis in the Fock
space actually arises as a ``squeezed'' variant of the standard monomial basis,
namely, the~former is obtained from the latter by a certain ``squeezing''
unitary operator. We~show that
all this persists also in the context of Laguerre polynomials and the Laguerre
Fock space of Barut and Girardello described in the preceding section.

For~simplicity, we~treat only the case $\alpha=0$, leaving the extension to
the generalized Laguerre polynomials $\La_n$ to the interested reader.

It is immediate from Lemma~4 that
$$ \Big\{ \frac{(-z)^n}{n!2^n\sqrt{2\pi}} \Big\} _{n=0}^\infty
 =: \{e_n(z)\}_{n=0}^\infty   $$
is an orthonormal basis of $L^2_{\text{hol}}(\CC,K_0(|z|)\,dz)$.

On~the other hand from Theorem~5, by~the~simple change of variable
$z\mapsto\frac{1-\epsilon}{2\sqrt\epsilon}z$, it~transpires that
$$ \Big\{ \sqrt{\frac{1-\epsilon}{2\pi}} \epsilon^{n/2} e^{\sqrt\epsilon z/2}
 L_n\Big(\frac{1-\epsilon}{2\sqrt\epsilon}z\Big) \Big\}_{n=0}^\infty
  =: \{E_{\epsilon,n}(z)\}_{n=0}^\infty  $$
is another orthonormal basis of the same space.

\begin{theorem}
Denote
$$ Q f (z) := \frac z2 f(z) - 2 \frac d{dz} z \frac d{dz} f(z).  $$
Then the operator
$$ U_\epsilon := \left(\sqrt{\frac{1+\sqrt\epsilon}{1-\sqrt\epsilon}}\right)
 {}\!{\vphantom{\bigg)}}^Q
 = \exp\Big[ \frac Q2 \log \frac{1+\sqrt\epsilon}{1-\sqrt\epsilon} \Big]  $$
satisfies
$$ U_\epsilon e_n = E_{\epsilon,n}, \qquad \forall n=0,1,2,\dots ,   $$
i.e.~maps the basis $\{e_n\}_{n=0}^\infty$ into the
basis~$\{E_{\epsilon,n}\}_{n=0}^\infty$.
\end{theorem}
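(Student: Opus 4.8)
The plan is to reduce the statement to a single first-order evolution equation and then verify it through a generating-function identity. First I would reparametrize, writing $\tau:=\tfrac12\log\tfrac{1+\sqrt\epsilon}{1-\sqrt\epsilon}$, so that $\sqrt\epsilon=\tanh\tau$ and $U_\epsilon=\exp(\tau Q)$. Since $\exp(\tau Q)$ is the solution operator of the Cauchy problem $\partial_\tau v=Qv$, the claim $U_\epsilon e_n=E_{\epsilon,n}$ becomes equivalent to showing that, with $\epsilon=\tanh^2\tau$, the family $\tau\mapsto E_{\epsilon,n}$ solves
$$ \frac{d}{d\tau} E_{\epsilon,n} = Q\,E_{\epsilon,n}, \qquad E_{\epsilon,n}|_{\tau=0}=e_n . $$
The initial condition is the easy half: as $\epsilon\searrow0$ the leading term $\tfrac{(-1)^n}{n!}x^n$ of $L_n(x)$ dominates, and a direct check gives $\epsilon^{n/2}L_n(\tfrac{1-\epsilon}{2\sqrt\epsilon}z)\to\tfrac{(-z)^n}{n!2^n}$ while $e^{\sqrt\epsilon z/2}\to1$, so that $E_{\epsilon,n}\to e_n$.

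To handle all $n$ at once I would pass to generating functions in an auxiliary variable $w$. On the source side $\sum_n e_n(z)\,w^n=\tfrac1{\sqrt{2\pi}}e^{-zw/2}$, while on the target side the Laguerre generating function (\ref{tLG}) gives, after simplifying the exponent,
$$ \Phi(\epsilon,z,w):=\sum_n E_{\epsilon,n}(z)\,w^n
= \sqrt{\tfrac{1-\epsilon}{2\pi}}\,\frac1{1-\sqrt\epsilon\,w}\,
\exp\Big(\tfrac z2\,\tfrac{w-\sqrt\epsilon}{\sqrt\epsilon\,w-1}\Big) . $$
Because $Q$ acts only in $z$, it commutes with $\sum_n(\,\cdot\,)w^n$, so the system of ODEs above is equivalent to the single identity $\partial_\tau\Phi=Q\Phi$; using $\tfrac{d\sqrt\epsilon}{d\tau}=1-\epsilon$ this becomes the first-order PDE
$$ (1-\epsilon)\,\partial_{\sqrt\epsilon}\Phi = Q\Phi
= \tfrac z2\Phi - 2\,\frac{\partial}{\partial z}\,z\,\frac{\partial}{\partial z}\Phi . $$

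The crux of the argument is this last verification. Writing $\Phi=P\,e^{zR/2}$ with prefactor $P=P(\sqrt\epsilon,w)$ and $R=\tfrac{w-\sqrt\epsilon}{\sqrt\epsilon\,w-1}$ linear-fractional in $w$, both sides of the PDE are of the form $e^{zR/2}$ times a polynomial of degree one in $z$, so matching them splits cleanly into two coefficient identities. The $z^0$-coefficient yields a log-derivative identity for the prefactor, $(1-\epsilon)\,\partial_{\sqrt\epsilon}\log P=-R$, and the $z^1$-coefficient yields a Riccati-type identity $(1-\epsilon)\,\partial_{\sqrt\epsilon}R=1-R^2$ for the Möbius factor; each reduces, after clearing the denominator $(\sqrt\epsilon\,w-1)^2$, to a one-line polynomial identity in $w$ and $\sqrt\epsilon$. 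I expect the bookkeeping to be the only delicate part, with the key simplification being that $\partial_{\sqrt\epsilon}R$ and $\partial_z\big(z\,\partial_z e^{zR/2}\big)$ are again expressible through $R$ and $R^2$.

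Once $\partial_\tau\Phi=Q\Phi$ is confirmed, uniqueness for the linear evolution equation forces $\Phi=\exp(\tau Q)\big[\tfrac1{\sqrt{2\pi}}e^{-zw/2}\big]$, and extracting the coefficient of $w^n$ gives $U_\epsilon e_n=E_{\epsilon,n}$ for every $n$. A minor technical point to address along the way is the rigorous meaning of the unbounded operator $\exp(\tau Q)$: this is handled by observing that each $E_{\epsilon,n}$ is a polynomial in $z$ depending analytically on $\tau$ near $\tau=0$, so the whole comparison can equivalently be carried out order by order in the Taylor expansion in $\tau$, where the identity $\partial_\tau E_{\epsilon,n}=Q\,E_{\epsilon,n}$ together with the initial value determines all coefficients.
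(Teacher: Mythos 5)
Your proposal is correct, and the computations it rests on check out: with $u=\sqrt\epsilon=\tanh\tau$ one indeed has $(1-u^2)\,\partial_u R=1-R^2$ and $(1-u^2)\,\partial_u\log P=-R$ for $R=\frac{w-u}{uw-1}$ and $P=\sqrt{\tfrac{1-u^2}{2\pi}}(1-uw)^{-1}$, which is exactly what is needed to make $\partial_\tau\Phi=Q\Phi$ hold, and the initial condition $\Phi|_{\tau=0}=\tfrac1{\sqrt{2\pi}}e^{-zw/2}$ is right. The route is genuinely different from the paper's. The paper's proof hinges on the Laguerre differential equation, which shows that the (non-normalizable) functions $e^{-z/2}L_n(z)$ are eigenfunctions of $Q$ with eigenvalues $2n+1$; it then applies $U_\epsilon$ diagonally to the expansion of $e^{-wz/2}$ in these eigenfunctions (one substitution into the generating function) and resums the result (a second substitution), comparing the outcome with the generating function of the $E_{\epsilon,n}$. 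You instead never diagonalize $Q$: you characterize $U_\epsilon e_n$ as the solution of the flow $\partial_\tau v=Qv$ and verify that the closed-form generating function $\Phi$ of the $E_{\epsilon,n}$ satisfies this flow, the whole identity collapsing to a Riccati equation for the M\"obius factor. Your version needs only the generating function (not the ODE) and makes the one-parameter-group structure explicit, at the cost of some chain-rule bookkeeping; the paper's version exposes the spectral picture that connects to the $\mathfrak{su}(1,1)$ generators discussed afterwards. Two small points to fix: $E_{\epsilon,n}$ is not a polynomial in $z$ (it is $e^{\sqrt\epsilon z/2}$ times one), though it is entire and analytic in $\tau$, which is all your Taylor-coefficient argument needs; and the identification of the power-series solution of the flow with the spectrally defined $U_\epsilon$ deserves a word, but the paper's own proof operates at the same formal level (it applies $U_{\epsilon^2}$ termwise to a series of non-normalizable eigenfunctions), so this is not a gap relative to the original.
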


Note that, by~a simple computation,
$$ \frac z2 e_n = -(n+1) e_{n+1}, \qquad
   2 \frac d{dz} z \frac d{dz} e_n = -n e_{n-1},   $$
from which one easily checks that $Q=T-T^*$ where $Tf(z)=\frac z2 f(z)$
is the operator of multiplication by $\frac z2$ on $L^2_{\text{hol}}
(\CC,K_0(|z|)\,dz)$. Thus $iQ$ is self-adjoint, and $U_\epsilon$ is unitary.
However, to~see that $U_\epsilon e_n=E_{\epsilon,n}$ requires more work.

\begin{proof} Recall once again the generating function for Legendre
polynomials
$$ (1-a) \sum_{n=0}^\infty a^n L_n(z) = e^{\frac a{a-1} z},
 \qquad |a|<1, z\in\CC .  $$
Taking in particular $a=\frac{w-1}{w+1}$ (so~$|a|<1$ corresponds to
$\Re w>0$), we~have $\frac a{a-1}=\frac{1-w}2$ and
$$ \frac2{w+1} \sum_{n=0}^\infty \Big(\frac{w-1}{w+1}\Big)^n L_n(z)
 = e^{\frac{1-w}2z},   $$
or
\[ \frac2{w+1} \sum_{n=0}^\infty \Big(\frac{w-1}{w+1}\Big)^n e^{-z/2} L_n(z)
 = e^{-wz/2}.  \label{tAA}    \]
On~the other hand, taking $a=\frac{w-1}{w+1}\frac{1+\epsilon}{1-\epsilon}$
(so~$|a|<1$ now corresponds to $w$ in the disc with diameter
$(\epsilon,\frac1\epsilon)$ in the right half-plane), we~similarly~get
\[ \frac{2(1-\epsilon w)}{(w+1)(1-\epsilon)} \sum_{n=0}^\infty
 \Big(\frac{w-1}{w+1}\frac{1+\epsilon}{1-\epsilon}\Big)^n e^{-z/2} L_n(z)
 = e^{\frac{\epsilon-w}{2(1-\epsilon w)}z}.   \label{tBB}   \]
Now from the differential equation for Legendre polynomials
$$ z L''_n(z) + (1-z) L'_n(z) = -n L_n(z)  $$
we~obtain upon a simple computation using just the Leibniz rule
$$ \Big(\frac z2-2\frac d{dz} z\frac d{dz} \Big) e^{-z/2} L_n(z)
 = (2n+1) e^{-z/2} L_n(z) ,   $$
i.e.~$Q(e^{-z/2}L_n(z))=(2n+1)e^{-z/2}L_n(z)$. Hence
$$ U_{\epsilon^2} (e^{-z/2}L_n(z)) = \Big(\frac{1+\epsilon}{1-\epsilon}\Big)
    ^{\frac{2n+1}2} e^{-z/2}L_n(z).  $$
Substituting this into (\ref{tAA}) yields
\begin{align*}
U_{\epsilon^2} e^{-wz/2} &= \frac2{w+1} \sqrt{\frac{1+\epsilon}{1-\epsilon}}
\sum_{n=0}^\infty \Big(\frac{w-1}{w+1} \frac{1+\epsilon}{1-\epsilon}\Big)^n
 e^{-z/2} L_n(z)   \\
&= \frac2{w+1} \sqrt{\frac{1+\epsilon}{1-\epsilon}}
 \frac{(w+1)(1-\epsilon)}{2(1-\epsilon w)}
 e^{\frac{\epsilon-w}{2(1-\epsilon w)}z}   \\
&= \frac{\sqrt{1-\epsilon^2}}{1-\epsilon w}
 e^{\frac{\epsilon-w}{2(1-\epsilon w)}z}   \end{align*}
by~(\ref{tBB}). Expanding the exponential on the left-hand side shows that it
equals
$$ U_{\epsilon^2} \sum_{n=0}^\infty \frac{(-z)^n}{n!2^n} w^n
 = \sqrt{2\pi} \sum_{n=0}^\infty w^n U_{\epsilon^2} e_n(z) .  $$
On~the other hand, using one more time the~generating function for Legendre
polynomials, this time with $a=\epsilon w$, shows that
\begin{align*}
\sqrt{2\pi} \sum_{n=0}^\infty w^n E_{\epsilon^2,n}(z)
&= \sum_{n=0}^\infty \sqrt{1-\epsilon^2} \epsilon^n w^n e^{\epsilon z/2}
 L_n\Big(\frac{1-\epsilon^2}{2\epsilon}z\Big)  \\
&= \sqrt{1-\epsilon^2} e^{\epsilon z/2} \frac1{1-\epsilon w}
 e^{\frac{\epsilon w}{\epsilon w-1} \frac{1-\epsilon^2}{2\epsilon} z} \\
&= \frac{\sqrt{1-\epsilon^2}}{1-\epsilon w}
 e^{\frac{\epsilon-w}{2(1-\epsilon w)}z} .   \end{align*}
Consequently,
$$ \sum_{n=0}^\infty w^n U_{\epsilon^2} e_n(z)
 = \sum_{n=0}^\infty w^n E_{\epsilon^2,n}(z).   $$
Comparing coefficients at like powers of $w$ and replacing $\epsilon$ by
$\sqrt\epsilon$, the theorem follows.    \end{proof}

Note that the operator $T$ and its adjoint $T^*$ mentioned before the last
proof coincide (up~to~a different normalization) with the generators
$\mathcal L_+$ and $\mathcal L_-$, respectively, of~the action of the Lie
algebra $\mathfrak{su}(1,1)$ on~$\LGe$ defined in (6.19) in~\cite{BaG}.
The~reader is referred to Section~V in \cite{AliKr} for further discussion
and physical interpretation of the ``squeezing'' procedure in the Hermite case.
By~analogy we shall refer to $U_\epsilon$ as the Laguerre
{\em squeeze operator}, although at this point we do not have a physical
meaning for this squeezing. Furthermore, using the squeeze operator,
we~could also derive a family of squeezed Barut-Girardello coherent states,
or~express the Barut-Girardello states themselves in terms of the
squeezed basis, just as was done for the canonical coherent
states in~\cite{AliKr}.

\section{Toeplitz operators on the Laguerre Fock space}
For a ``symbol'' $f\in L^\infty(\CC)$, the associated Toeplitz operator
$\Te_f=T_f$ on the Laguerre Fock space $\LGe$ is again given~by
\[ \Te_f u = P_\epsilon(fu), \qquad u\in\LGe,   \label{TXA}  \]
where $P_\epsilon:L^2(\CC,d\nu_\epsilon)\to\LGe$ is the orthogonal projection.
Our~aim in this section is to find the ``semi-classical'' asymptotics like
(\ref{tTA}) of these operators (with $h=1-\epsilon$).
There are well-established methods to handle this for measures $d\nu_\epsilon$
with power-like dependence on~$\epsilon$, that~is, of~the form
$d\nu_\epsilon(z)=F(z)^{c(\epsilon)}G(z)\,dz$ with some fixed positive
weights~$F,G$ and some real-valued function $c(\epsilon)$ of~$\epsilon$,
$c(\epsilon)\to+\infty$ as $\epsilon\nearrow1$; however, our $d\nu_\epsilon$
in (\ref{NUE}) are plainly not quite of this type,
so~we need to work from scratch.

Recall that, quite generally, on~a~family of reproducing kernel Hilbert spaces
$L^2\hol(\Omega,d\rho_\epsilon)$ of holomorphic functions with some measures
$d\rho_\epsilon$, $0<\epsilon<1$, on~a~domain $\Omega\in\CC^n$, establishing
an asymptotic expansion like (\ref{tTE}) for $T_fT_g$ is actually tantamount
to establishing the asymptotic behaviour of the \emph{Berezin transform}
$$ B_\epsilon f(z) := \int_\Omega f(w) \frac{|K_\epsilon(z,w)|^2}
 {K_\epsilon(z,z)} \, d\rho_\epsilon(w),   $$
where $K_\epsilon(z,w)$ is the reproducing kernel of
$L^2\hol(\Omega,d\rho_\epsilon)$. Indeed, from the definition (\ref{TXA})
it is immediate that $T_fT_g=T_{fg}$ whenever $g$ is holomorphic or
(upon taking adjoints) $f$ is anti-holomorphic.
Thus the bidifferential operators $C_j(f,g)$ in (\ref{tTE}) involve only
holomorphic derivatives of $f$ and anti-holomorphic derivatives of~$g$.
It~is therefore enough to determine $C_j(f,g)$ for holomorphic $f$ and
anti-holomorphic~$g$. For~such $f,g$, let~us apply both sides of (\ref{tTE})
to the reproducing kernel $K_{\epsilon,w}\equiv K_\epsilon(\cdot,w)$
at $w\in\Omega$, and evaluate at~$w$. Since $T_g K_{\epsilon,w}=g(w)
K_{\epsilon,w}$ for anti-holomorphic $g$ by the reproducong property
of~$K_\epsilon$, the left-hand side of (\ref{tTE}) gives just $f(w)g(w)
K_\epsilon(w,w)$; while the right-hand side, in~view of~(\ref{TXA}), becomes
$$ \sum_{j=0}^\infty h^j \int_\Omega C_j(f,g)(z) |K_\epsilon(z,w)|^2
 d\rho_\epsilon(z) = \sum_{j=0}^\infty h^j B_\epsilon [C_j(f,g)](w)
 K_\epsilon(w,w). $$
(Remember that $h=1-\epsilon$.) Consequently, we~get, at~least formally,
$$ \sum_{j=0}^\infty h^j C_j(f,g) = B_\epsilon^{-1}(fg),  $$
with the inverse being understood in the sense of formal power series in
$h=1-\epsilon$. In~other words, if~$B_\epsilon$ has an asymptotic expansion
\[ B_\epsilon \approx \sum_{j=0}^\infty (1-\epsilon)^j Q_j   \label{TXB}   \]
with some differential operators~$Q_j$, and
$$ B_\epsilon^{-1} \approx \sum_{j=0}^\infty (1-\epsilon)^j \cR_j, \qquad \cR_j =
 \sum_{\alpha,\beta} R_{j\alpha\beta} \partial^\alpha \dbar^\beta ,  $$
is~the inverse of (\ref{TXB}) (as~a~formal power series in $1-\epsilon$), then
\[ C_j(f,g) = \sum_{\alpha,\beta} R_{j\alpha\beta} (\partial^\alpha f)
 (\dbar^\beta g).   \label{TXO}   \]
(Here the summations extend over all multiindices $\alpha,\beta$.)

See \cite{E52} for more details of the above argument.

\begin{example}
For~the ordinary Fock space~$\mathcal F_h=L^2\hol(\CC,e^{-|x|^2/h}\frac{dx}
{\pi h})$, $h>0$, the reproducing kernel is given by $K_h(z,w)=e^{z\ow/h}$,~so
$$ B_h f(z) = \frac1{\pi h} \intC f(w) e^{-|z-w|^2/h} \,dw
 = e^{h\Delta/4} f(z)   $$
is just the heat solution operator at time $t=\frac h4$. Its formal inverse
$B_h^{-1}$ is thus $e^{-h\Delta/4}$, and
$$ C_j(f,g) = \frac{(-1)^j}{j!} (\partial^j f)(\dbar^j g) ,  $$
recovering the well-known formula for the Berezin-Toeplitz quantization
on~$\CC$. \end{example}

Returning to our Laguerre Fock space, we~are thus confronted with finding the
asymptotics as $\epsilon\nearrow1$ of the associated Berezin transform
\[ B_\epsilon f(z) = I_0\Big(\frac{2\sqrt\epsilon|z|}{1-\epsilon}\Big)^{-1}
 \intC f(w) \Big|I_0\Big(\frac{2\sqrt{\epsilon z\ow}}{1-\epsilon}\Big)\Big|
 ^2 \, \frac{d\nu_\epsilon(w)}{1-\epsilon} ,  \label{TXC}   \]
where we have used the formula for the reproducing kernel of $\LGe$ from the
end of Section~3.

It~turns out to be more convenient, instead of $\epsilon\in(0,1)$, to~use the
parameter
$$ \alpha := \frac{2\sqrt\epsilon}{1-\epsilon}.   $$
Thus $\epsilon\nearrow1$ corresponds to $\alpha\to+\infty$. We~will write
$B_\alpha$ instead of $B_\epsilon$ from now~on.

\begin{theorem}
Let $z\in\CC$, $z\neq0$. For any $f\in L^\infty(\CC)$ which is $C^\infty$ in a
neighbourhood of~$z$, we~have
\[ B_\alpha f(z) \approx \sum_{j=0}^\infty \alpha^{-j} Q_j f(z)
 \qquad\text{as } \alpha\to+\infty,  \label{TXD}   \]
for some differential operators $Q_j$ on $\CC\setminus\{0\}$, $j=0,1,2,\dots$
$($not~depending on~$f$ and~$z)$. Explicitly,
\[ \begin{aligned}
Q_0 &= I, \qquad Q_1=|z|\Delta,   \\
Q_1 &= \tfrac12\Delta + (z\partial+\oz\dbar)\Delta + \tfrac12\Delta^2.
 \end{aligned}  \label{TXK}   \]

For $z=0$ and $f\in L^\infty(\CC)$ smooth near the origin, we~have
\[ B_\alpha f(0) \approx \sum_{j=0}^\infty \alpha^{-2j} \Delta^j f(0)
 \qquad \text{as } \alpha\to+\infty .   \label{TXE}   \]
\end{theorem}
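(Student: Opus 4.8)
The plan is to compute with the explicit kernel. Writing $\alpha=\tfrac{2\sqrt\epsilon}{1-\epsilon}$ and using $\tfrac{2\epsilon}{(1-\epsilon)^2}=\tfrac{\alpha^2}2$ together with the reproducing kernel $\tfrac1{1-\epsilon}I_0(\alpha|z|)$ from the end of Section~3, formula (\ref{TXC}) becomes
$$ B_\alpha f(z)=\frac{\alpha^2}{2\pi\,I_0(\alpha|z|)}\intC f(w)\,\big|I_0(\alpha\sqrt{z\ow})\big|^2\,K_0(\alpha|w|)\,dw, $$
with the principal branch of the square root. I~would first dispose of the special point $z=0$, where $I_0(\alpha\sqrt{z\ow})\equiv1$ and the expression collapses to $\tfrac{\alpha^2}{2\pi}\intC f(w)K_0(\alpha|w|)\,dw$. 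In~polar coordinates $w=\rho e^{i\psi}$ the substitution $\rho=s/\alpha$, Taylor expansion of $f$ about the origin, and integration over $\psi$ (which forces equal holomorphic and antiholomorphic orders) reduce everything to the radial integrals $\int_0^\infty s^{2m+1}K_0(s)\,ds=4^m(m!)^2$, a restatement of Lemma~\ref{leBessel}. Since $\Delta^m=4^m\partial^m\dbar^m$, this yields (\ref{TXE}) \emph{exactly}; the only analytic point is control of the Taylor remainder, which is $O(\alpha^{-\infty})$ because $K_0(\alpha\rho)$ concentrates at $\rho=O(1/\alpha)$.

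For $z\neq0$ the mechanism is completely different and is a Laplace (steepest-descent) computation. The~first step is to read off the exponential rate: from $I_0(t)\sim e^t/\sqrt{2\pi t}$, from (\ref{tKE}), and from the elementary identity $\sqrt{z\ow}+\sqrt{\oz w}=2\sqrt{|z||w|}\cos\tfrac{\arg z-\arg w}2$, the integrand is governed by $e^{\alpha\Phi(w)}$ with the real phase
$$ \Phi(w)=2\Re\sqrt{z\ow}-|z|-|w|. $$
Maximizing first over $\arg w$ and then over $|w|$ gives $\Phi(w)\le-(\sqrt{|z|}-\sqrt{|w|})^2\le0$, with equality only at $w=z$; thus $\Phi$ has a unique, nondegenerate global maximum there, and expanding to second order produces the \emph{isotropic} form $\Phi(w)=-\tfrac{1}{4|z|}|w-z|^2+O(|w-z|^3)$, so the effective Gaussian width is $\sqrt{|z|/\alpha}$. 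Because $\Phi(0)=-|z|<0$ and $\Phi\to-\infty$ at infinity, the neighbourhood of the origin (where $K_0$ has its logarithmic singularity and the large-argument Bessel asymptotics break down) and the region at infinity each contribute only $O(\alpha^{-\infty})$. Hence the integral localizes to a shrinking neighbourhood of $w=z$, on which the \emph{uniform} large-argument expansions of $I_0$ and $K_0$ apply.

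On that neighbourhood I~would carry out the Laplace expansion: rescale $w=z+\alpha^{-1/2}\zeta$ and expand in powers of $\alpha^{-1/2}$ the algebraic Bessel prefactors (together with their $O(1/\alpha)$ corrections), the area element, the cubic-and-higher part of $\alpha\Phi$, and the Taylor series of $f$ about $z$, finally integrating the resulting Gaussian moments; odd moments vanish by parity, so the expansion runs in integer powers of $\alpha^{-1}$, giving (\ref{TXD}). One checks that the non-exponential prefactors combine to something \emph{constant} near $z$, so the leading Laplace approximation is precisely the two-dimensional heat kernel $\tfrac{\alpha}{4\pi|z|}\exp\!\big(-\tfrac{\alpha}{4|z|}|w-z|^2\big)$, i.e.\ the operator $e^{(|z|/\alpha)\Delta}$; this already delivers $Q_0=I$ and $Q_1=|z|\Delta$, and the normalization $B_\alpha1=1$ (immediate from the reproducing property) forces every $Q_j$ with $j\ge1$ to annihilate constants and so serves as a running check.

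The~\textbf{main obstacle} is the order-$\alpha^{-2}$ bookkeeping for $Q_2$, which is delicate rather than merely long. Several a~priori first- and second-derivative contributions compete at this order --- from the quartic (and squared cubic) terms of $\Phi$, from the second Bessel correction, and from the higher Taylor coefficients of $f$ --- and they are entangled with the \emph{nonlinear} geometry near $z$ (the passage between the polar displacement and $w-z$, equivalently the $\zeta$-dependence of $|w|$ and of $\sqrt{|z||w|}$). Indeed, already at order $\alpha^{-1}$ the naive first-derivative contributions to $Q_1$ coming from the cubic phase exactly cancel the first-derivative contribution hidden in the curvature of this change of variables, which is why $Q_1$ reduces to the pure $|z|\Delta$; the analogous but longer cancellations at order $\alpha^{-2}$ are what single out the three surviving pieces $\tfrac12\Delta$, $(z\partial+\oz\dbar)\Delta$, $\tfrac12\Delta^2$ of $Q_2$ (note in particular that $Q_2\neq\tfrac12|z|^2\Delta^2$, so the naive heat-semigroup guess already fails here). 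The~remaining, more routine task is to make the localization and Taylor-remainder estimates uniform in $\alpha$, thereby upgrading the formal expansion to a genuine asymptotic one.
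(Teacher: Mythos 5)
Your treatment of $z=0$ is essentially identical to the paper's: localize using the exponential decay (\ref{tKE}) of $K_0$, Taylor-expand $f$ at the origin, kill the off-diagonal terms by angular integration, and evaluate the radial moments via Lemma~\ref{leBessel}; that part is fine. For $z\neq0$ your route is genuinely different from the proof the paper actually gives, and coincides in substance with the alternative the paper itself sketches in the Remark following the theorem: you substitute the large-argument expansions (\ref{TXN}), (\ref{TXQ}) of $I_0$ and $K_0$ into the two-dimensional integral and run a Laplace expansion about the maximum of the real phase $\Phi(w)=2\Re\sqrt{z\ow}-|z|-|w|$ at $w=z$. Your identification of the phase, of the isotropic quadratic part $-|w-z|^2/(4|z|)$, of the leading heat kernel $e^{(|z|/\alpha)\Delta}$ (hence $Q_0=I$, $Q_1=|z|\Delta$), and your $O(\alpha^{-\infty})$ estimates near $w=0$ and at infinity are all correct, and the latter are precisely what is needed to legitimize inserting the asymptotic series --- the ``justification'' the paper says its Remark would require. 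The paper's own proof avoids this issue by replacing $I_0$ and $K_0$ with their \emph{integral representations} and changing variables to $w=(1+y)^2z$, which makes the phase exactly quadratic ($\Phi=-|z|\,|y|^2$) and turns the problem into a five-dimensional Laplace integral (\ref{TXG}) handled by the standard rigorous stationary-phase theorem; the price is the unwieldy coefficient formula (\ref{TXH}), evaluated by computer for $m=2$, whereas your route leads to the much simpler closed form for $\cR_m$ recorded in the Remark. Two caveats on your write-up: you never actually execute the order-$\alpha^{-2}$ bookkeeping, so the explicit $Q_2$ in (\ref{TXK}) is described but not verified; and the algebraic prefactors do not combine to a constant near $z$ --- they equal $\alpha/(4\pi|w|)$ --- although your later discussion of the first-derivative cancellations at order $\alpha^{-1}$ shows you are in fact accounting for this $w$-dependence correctly.
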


Note that the asymptotics are thus discontinuous at $z=0$; this can be viewed
as an analogue of the familiar Stokes phenomenon in complex analysis.

\begin{proof}
For $z=0$, (\ref{TXC}) becomes simply
\begin{align*}
 B_\alpha f(z) &= \frac1{1-\epsilon} \intC f(w) \, d\nu_\epsilon(w)
 = \frac{2\epsilon}{(1-\epsilon)^2} \intC f(w) K_0\Big(\frac{2\sqrt\epsilon}
 {1-\epsilon}|w|\Big) \, dw  \\
 &= \frac{\alpha^2}{2\pi} \intC f(w) K_0(\alpha|w|) \, dw  .
\end{align*}
For any $\delta>0$ and $\alpha\ge1$, we~have
\begin{align*}
& \Big| \int_{|w|>\delta} f(w) K_0(\alpha|w|) \,dw \Big|
 \le \|f\|_\infty 2\pi \int_\delta^\infty K_0(\alpha r) \, r\,dr   \\
&\hskip8em
 \le \|f\|_\infty c_\delta \int_\delta^\infty e^{-\alpha r/2} \,dr
 = \|f\|_\infty \frac{2c_\delta}\alpha e^{-\alpha\delta/2}
\end{align*}
for some finite $c_\delta$, thanks to~(\ref{tKE}). This decays faster than any
negative power of $\alpha$ as $\alpha\to+\infty$.

On~the other hand, for $|x|<\delta$ with $\delta$ small enough we may replace
$f$ by its Taylor expansion at the origin, giving
$$ \frac1{1-\epsilon} \sum_{j,k=0}^\infty \frac{\partial^j\dbar^k f(0)}{j!k!}
 \int_{|w|<\delta} w^j \ow^k \, d\nu_\epsilon(w).   $$
Again, modulo an exponentially small error, the last integral equals, as~we
have seen in Section~3,
$$ \intC w^j \ow^k \, d\nu_\epsilon(w) = \delta_{jk} \frac{(1-\epsilon)^{2j}}
 {\epsilon^j} j!^2 .  $$
Hence
\begin{align*}
B_\alpha f(0) &\approx \sum_{j,k=0}^\infty \frac{\partial^j\dbar^k f(0)}{j!k!}
 \delta_{jk} \frac{(1-\epsilon)^{2j}} {\epsilon^j} j!^2  \\
&= \sum_{k=0}^\infty \frac{(1-\epsilon)^{2k}\Delta^k f(0)}{4^k \epsilon^k}
 = \sum_{k=0}^\infty \frac{\Delta^k f(0)}{\alpha^{2k}}
\end{align*}
as $\alpha\to+\infty$, establishing~(\ref{TXE}).

For~the rest of the proof, we~thus assume $z\neq0$. The~change of variable
$$ w=(1+y)^2z, \qquad \Re y>-1,  $$
then transforms (\ref{TXC}) into
\begin{align*}
B_\alpha f(z) &= \frac1{I_0(\alpha|z|)} \int_{\Re y>-1} f((1+y)^2z) \,
 I_0((1+\oy)\alpha|z|) I_0((1+y)\alpha|z|) \\
&\hskip12em \times
 \frac{\alpha^2}{2\pi} K_0(\alpha|1+y|^2|z|) \, |2z(1+y)|^2 \, dy ,
\end{align*}
or, introducing
$$ \lambda := \alpha|z|    $$
for convenience,
\[ B_\alpha f(z) = \frac{2\lambda^2}{\pi I_0(\lambda)} \int_{\Re y>-1}
 f((1+y)^2z) \, |I_0((1+y)\lambda)|^2 K_0(|1+y|^2\lambda) \,|1+y|^2\,dy.
 \label{TXF}   \]
Using the integral representation
$$ I_0(z) = \frac1\pi \int_{-1}^1 \frac{e^{-tz}}{\sqrt{1-t^2}} \, dt  $$
for $I_0$ and the formula (\ref{KNT}) for~$K_0$, this can be rewritten~as
\begin{align*}
& \frac{\pi^3 I_0(\lambda)}{2\lambda^2} B_\alpha f(z) = \\
& \quad \int_{\Re y>-1} \int_{-1}^1 \int_{-1}^1 \int_1^\infty f((1+y)^2z)
 \frac{e^{-t\lambda(1+\oy)-s\lambda(1+y)-x\lambda|1+y|^2}}
 {\sqrt{(1-t^2)(1-s^2)(x^2-1)}} \, |1+y|^2 \, dx\,ds\,dt\,dy .
\end{align*}
Making one more change of variables
$$ t=T^2-1,\quad s=S^2-1,\quad x=X^2+1,  $$
the right-hand side becomes
\begin{align}
& \int_{\Re y>-1} \int_{-\sqrt2}^{\sqrt2} \int_{-\sqrt2}^{\sqrt2} \int_{\RR}
 \frac{f((1+y)^2z) |1+y|^2}{\sqrt{(2-T^2)(2-S^2)(2+X^2)}}   \nonumber \\
& \hskip8em {\vphantom\int} \times
 e^{(1-|y|^2-(1+\oy)T^2-(1+y)S^2-|1+y|^2X^2)\lambda} \, dX\,dS\,dT\,dy
 \nonumber   \\
& \begin{aligned} &\equiv e^\lambda \int_{\Re y>-1} \int_{-\sqrt2}^{\sqrt2}
 \int_{-\sqrt2}^{\sqrt2} \int_{\RR} F(y,T,S,X)  \\
 &\hskip8em \times
 e^{-(|y|^2+T^2+S^2+X^2+\cT(y,T,S,X))\lambda}    {\vphantom\int}
  \, dX\,dS\,dT\,dy,  \end{aligned}  \label{TXG}
\end{align}
where
\[ \begin{aligned}
F(y,T,S,X) &:= \frac{f((1+y)^2z) |1+y|^2}{\sqrt{(2-T^2)(2-S^2)(2+X^2)}} , \\
\cT(y,T,S,X)&:=T^2\oy + S^2 y+X^2(y+\oy+|y|^2) .  \end{aligned}  \label{TXI} \]
Note that the factor at $\lambda$ in the exponent in the integrand in
(\ref{TXG}) has a global maximum at the origin $T=S=X=y=0$, and vanishes
there precisely to the second order. Asymptotics as $\lambda\to+\infty$ of
such integrals is obtained by the standard stationary phase (WJKB) method;
in~the present case, this can be made quite explicit as follows.
Recall first of all that by the formula for the solution of the heat equation,
\begin{align*}
& \Big(\frac\lambda\pi\Big)^{5/2} \int_{\CC\times\RR^3} G(y,T,S,X)
 e^{-(|y|^2+T^2+S^2+X^2)\lambda} \, dX\,dS\,dT\,dy
 = e^{\Delta/(4\lambda)} G(0)   \\
&\hskip4em \approx \sum_{j=0}^\infty \frac1{j!(4\lambda)^j}
  \Big[ \frac{4\partial^2}{\partial y\partial\oy}
      + \frac{\partial^2}{\partial T^2} + \frac{\partial^2}{\partial S^2}
      + \frac{\partial^2}{\partial x^2} \Big]^j G(0)  .
\end{align*}
Arguing as in the beginning of this proof, one~sees that this holds also for
integration over any open subset containing the origin, instead of the whole
$\CC\times\RR^3$; in~particular, we~can apply it to the integral~(\ref{TXG}),
with
$$ G:= F\, e^{-\lambda\cT}  $$
(note that this depends also on~$\lambda$). We~thus obtain, at~least formally,
\begin{align*}
& \int\!\int\!\int\!\int F \, e^{-(|y|^2+T^2+S^2+X^2+\cT)\lambda}
 \, dX\,dS\,dT\,dy  \\
& \hskip1.2em \approx \Big(\frac\pi\lambda\Big)^{5/2} \sum_{j,k=0}^\infty
 \frac{\lambda^k}{j!(4\lambda)^j}
  \Big[ \frac{4\partial^2}{\partial y\partial\oy}
      + \frac{\partial^2}{\partial T^2} + \frac{\partial^2}{\partial S^2}
      + \frac{\partial^2}{\partial x^2} \Big]^j
  \Big( F\, \frac{(-\cT)^k}{k!} \Big) \;\Big|_{T=S=X=y=0} .
\end{align*}
Note that as $\cT$ vanishes to third order at the origin, one~gets nonzero
summands only for $0\le3k\le2j$. Rearranging the series we thus~get
\[ \begin{aligned}
& \Big(\frac\lambda\pi\Big)^{5/2} \frac{\pi^3 I_0(\lambda)}
 {2\lambda^2 e^\lambda} B_\alpha f(z) \approx   \\
&\hskip2em \sum_{m=0}^\infty \lambda^{-m}
 \sum_{j=0}^{2m} \Big[ \frac{\partial^2}{\partial y\partial\oy} + \frac14
  \Big(\frac{\partial^2}{\partial T^2}+\frac{\partial^2}{\partial S^2}
   +\frac{\partial^2}{\partial X^2}\Big)\Big]^{j+m} \frac{(-\cT)^j F}
  {(m+j)!j!} \Big|_{T=S=X=y=0} .   \end{aligned}   \label{TXH}  \]
Though our argument so far has been just formal, the~formula obtained is valid
and can be proved fully rigorously, see \cite[pp.~126--127]{Fed}.

Restoring $F$ and $\cT$ from~(\ref{TXI}), we~see that the right-hand side of
(\ref{TXH}) has the form
\[ \sum_{m=0}^\infty \lambda^{-m} (\cR_m f)(z),   \label{TXM}  \]
where $\cR_m$ are some differential operators on $\CC\setminus\{0\}$
with $C^\infty$ coefficients (in~fact, $\cR_m$~is of order~$2m$).
Explicit calculations (using computer for $m=2$) yield
\[ \begin{aligned}
& \cR_0 f = 2^{-3/2}f ,  \qquad
  \cR_1 f = 2^{-3/2} (\tfrac18 f+|z|\Delta f),  \\
& \cR_2 f = 2^{-3/2} (\tfrac9{128}f+\tfrac58|z|^2\Delta f
     +|z|^2(z\partial+\oz\dbar)\Delta f+\tfrac12|z|^2\Delta^2f) . \end{aligned}
   \label{TXL}   \]
Observe that in view of the reproducing property of the reproducing kernel,
one~has $B_\alpha\jedna=\jedna$ for all~$\alpha$ (where $\jedna$ denotes the
function constant~one). Consequently, taking $f=\jedna$ in~(\ref{TXH}),
\[ \Big(\frac\lambda\pi\Big)^{5/2} \frac{\pi^3 I_0(\lambda)}
 {2\lambda^2 e^\lambda} \approx \sum_{m=0}^\infty \lambda^{-m}
 (\cR_m\jedna)(z) .   \label{TXJ}   \]
Dividing (\ref{TXH}) by~(\ref{TXJ}), we~finally obtain
$$ B_\alpha f(z) \approx
 \frac{2\sqrt2 \sum_{m=0}^\infty \lambda^{-m} (\cR_m f)(z)}
      {2\sqrt2 \sum_{m=0}^\infty \lambda^{-m} (\cR_m \jedna)(z)}
  =: \sum_{m=0}^\infty \lambda^{-m} (\mathcal Q_m f)(z)    $$
with some differential operators $\mathcal Q_j$ on $\CC\setminus\{0\}$,
proving~(\ref{TXD}). (Note that the division of formal power series above
makes sense, since $2\sqrt2\cR_0\jedna=\jedna$.) Finally, lengthy but routine
calculation using (\ref{TXL}) yields~(\ref{TXK}).
\end{proof}

\begin{remark*}
A~somewhat simpler~way (which would require some justification however to make
it completely rigorous) to~get explicit expressions for the $\cR_m$ and $Q_m$
above is as follows. Recall that as $z\to\infty$, the functions $I_0$ and $K_0$
possess the asymptotic expansions
\begin{align}
I_0(z) &\approx \frac{e^z}{\sqrt{2\pi z}} \sum_{m=0}^\infty \frac{c_m}{z^m},
  \label{TXN}  \\
K_0(z) &\approx \sqrt{\frac\pi{2z}} e^{-z} \sum_{m=0}^\infty \frac{(-1)^mc_m}
  {z^m},   \label{TXQ}
\end{align}
where
$$ c_m = \frac{(\frac12)_m^2}{m!2^m} = \frac{\Gamma(m+\frac12)^2}{m!2^m\pi}; $$
see \cite[vol.~II, \S7.13.1, (5)~and~(7)]{BE}. Substituting these
into~(\ref{TXF}) yields
\begin{align*}
\frac{\pi I_0(\lambda)}{2\lambda^2} B_\alpha(z)
&= \sum_{j,k,l=0}^\infty \frac{c_jc_k(-1)^lc_l}{\lambda^{j+k+l+3/2}}
    e^\lambda \int_{\Re y>-1} \frac{f((1+y)^2z)}{(1+y)^j(1+\oy)^k|1+y|^{2l}}
    e^{-\lambda|y|^2}\, dy   \\
&\approx \frac{\sqrt\pi e^\lambda}{\sqrt8 \lambda^{5/2}}
  \sum_{j,k,l,n=0}^\infty
  \frac{c_jc_k(-1)^lc_l}{n!\lambda^{j+k+l+n}} \,\partial^n\dbar^n
  \Big[ \frac{f((1+y)^2z)} {(1+y)^{j+l}(1+\oy)^{k+l}} \Big] _{y=0}.
\end{align*}
Comparing this with (\ref{TXM}) yields immediately
$$ \cR_m f(z) = \sum_{j+k+l+n=m}
  \frac{c_jc_k(-1)^lc_l}{n!\sqrt8} \,\partial^n\dbar^n
  \Big[ \frac{f((1+y)^2z)} {(1+y)^{j+l}(1+\oy)^{k+l}} \Big] _{y=0},   $$
which is a much simpler expression than in~(\ref{TXH}).

We~pause to note that it is amusing to check that the asymptotics of
$I_0(\lambda)$ implicit from (\ref{TXJ}) coincide with~(\ref{TXN}).  \qed
\end{remark*}

Returning to our Toeplitz operator asymptotics on the Laguerre Fock space,
we~see from (\ref{TXK}) and (\ref{TXO}) that
$$ T_f T_g-T_g T_f \approx \frac{ih}{2\pi} T_{\{f,g\}},   $$
where
$$ \frac{ih}{2\pi}\{f,g\} = 2\frac{1-\epsilon}{\sqrt\epsilon}|z|
    (\partial f \dbar g- \dbar f\partial g) + O(h^2)   $$
(recall that $h=1-\epsilon$). Thus what we have is a quantization of the
K\"ahler metric
\[ \frac{dz\,d\oz}{2|z|} .  \label{TXP}   \]
In~view of the singularity at $z=0$, we~are in effect quantizing not $\CC$
but $\CC\setminus\{0\}$, where (\ref{TXP}) is just the pullback of the
(appropriately rescaled) Euclidean metricc on the universal cover $\CC$ of
$\CC\setminus\{0\}$. (This accounts for the discontinuity of the asymptotics
at $z=0$: physically, the~origin does not belong to our phase space and the
asymptotics there have no physical relevance.) A~potential for this metric
is given by $\Psi(z)=2|z|$, so~the traditional Berezin-Toeplitz quantization
would be using the spaces
$$ L^2\hol(\CC\setminus\{0\},e^{-2|z|/h}dz) = L^2\hol(\CC,e^{-2|z|/h}dz),
 \qquad h>0,  $$
as~described in the Introduction. (The~equality of the last two spaces follows
from the well-known fact --- easily checked using the Laurent expansion and
polar coordinates --- that any holomorphic and square-integrable function in
a punctured neighbourhood of the origin has a removable singularity there.)
The~latter can be carried out as in Example~2.16 in \cite{E12}, and we leave
to the reader the (amusing) comparison of the outcomes of the two approaches.

We~conclude this section by mentioning that, analogously as in Section~6
in~\cite{AEH}, one~can in principle derive the asymptotics of the Toeplitz
operators on $\LGe$ also by using the standard Weyl calculus on $L^2(\RR)$.
Namely, the~integral operator
$$ V_L f(z) := \int_0^\infty f(x) \beta_L(z,x) \, dx  $$
where
\begin{align*}
\beta_L(z,x)
&= \sum_n l_n(x) \epsilon^{n/2} L_n(z) e^{\epsilon z/(1-\epsilon)}  \\
&= e^{\frac\epsilon{1-\epsilon}z - \frac x2} L_{\sqrt\epsilon}(z,x)  \\
&= \frac1{1-\sqrt\epsilon}
 e^{-\frac{\sqrt\epsilon}{1-\epsilon}z
    - \frac{1+\sqrt\epsilon}{2(1-\sqrt\epsilon)}x}
 I_0\Big( \frac{2\sqrt{xz}\epsilon^{1/4}}{1-\sqrt\epsilon} \Big)
\end{align*}
is~a~unitary isomorphism of $L^2(\RR_+)$ onto $L^2\hol(\CC,d\nu_\epsilon)$
(taking the orthonormal basis $\{l_n\}_n$ of $L^2(\RR_+)$ into the orthonormal
basis $\{\epsilon^{n/2} e^{\epsilon z/(1-\epsilon)} L_n(z)\}_n$ of the latter).
Composing it with the obvious unitary isomorphism
$$ Q: f(x) \longmapsto x^{-1/2} f(\log x)   $$
of $L^2(\RR)$ onto~$L^2(\RR_+)$, we~thus obtain the operator
$$ V_L Q f(z) = \intR f(x) \beta_L(z,e^x) e^{x/2} \,dx   $$
sending $L^2(\RR)$ unitarily onto~$\LGe$.

One~can now, in~principle, again consider the Toeplitz operators~$T_\phi$,
$\phi\in L^\infty(\CC)$, on~$L^2\hol(\CC,d\nu_\epsilon)$ and try to identify
the transferred operator $Q^*V^*_L T_\phi V_L Q$ with some Weyl operator
on~$L^2(\RR)$. Proceeding as in Section~6 in~\cite{AEH}, we~find that
$Q^*V^*_LT_\phi V_LQ=W_a$ with $a$ given~by
\begin{align*}
& \check a\Big(\frac{x+y}2,x-y\Big)
= \intC \phi(z) \beta_L(z,e^y) \overline{\beta_L(z,e^x)} \, e^{\frac{x+y}2}
 \, d\nu_\epsilon(z)   \\
&\qquad = \frac{2\epsilon}{(1-\epsilon)(1-\sqrt\epsilon)^2\pi}
 e^{\frac x2 - \frac{1+\sqrt\epsilon}{2(1-\sqrt\epsilon)} e^x
    + \frac y2 - \frac{1+\sqrt\epsilon}{2(1-\sqrt\epsilon)} e^y}  \\
&\hskip4em \intC \phi(z) \, e^{-\frac{\sqrt\epsilon}{1-\epsilon}(z+\oz)}
 I_0\Big( \frac{2\sqrt z e^{y/2} \epsilon^{1/4}}{1-\sqrt\epsilon} \Big)
 I_0\Big( \frac{2\sqrt{\oz} e^{x/2} \epsilon^{1/4}}{1-\sqrt\epsilon} \Big)
 K_0\Big(\frac{2\sqrt\epsilon}{1-\epsilon}|z|\Big) \, dz ,
\end{align*}
whence
\begin{align*}
a(s,\eta) &= \frac{2\epsilon}{(1-\epsilon)(1-\sqrt\epsilon)^2\pi}
 \intC \phi(z) \, K_0\Big(\frac{2\sqrt\epsilon}{1-\epsilon}|z|\Big) \,
 e^{-\frac{\sqrt\epsilon}{1-\epsilon}(z+\oz) +s}   \\
&\qquad \intR e^{-ir\eta}
 e^{-\frac{1+\sqrt\epsilon}{1-\sqrt\epsilon} \cosh\frac r2} \,
 I_0\Big(\frac{2\sqrt{z\epsilon^{1/2}e^s}}{1-\sqrt\epsilon} e^{-\frac r4}\Big)
 I_0\Big(\frac{2\sqrt{\oz\epsilon^{1/2}e^s}}{1-\sqrt\epsilon} e^{\frac r4}\Big)
 \, dr \, dz .
\end{align*}
One~can now again replace $I_0$ and $K_0$ by their integral representations
(or,~at~least on a heuristic level, by~their asymptotic expansions (\ref{TXN})
and~(\ref{TXQ})) and proceed as before to obtain an asymptotic expansion for
$a(s,\eta)$ as $\epsilon\nearrow1$. Invoking the standard composition rules
for the Weyl calculus would then lead to the asymptotics (\ref{tTE}) of the
Toeplitz product~$T_fT_g$. We~omit the details.

\section{Legendre polynomials}
Another family of orthogonal polynomials susceptible to a similar treatment
as with $H_n(x)$ and $L_n(x)$ are the Legendre polynomials $P_n(x)$,
$n=0,1,2,\dots$, defined~by
$$ P_n(x) := \frac{(-1)^n}{n!2^n} \frac{d^n}{dx^n} (1-x^2)^n.   $$
These polynomials form an orthogonal basis on $L^2(-1,1)$:
$$ \int_{-1}^1 P_n(x) P_m(x) \,dx = \frac{\delta_{mn}}{m+\frac12}.   $$
The~corresponding series
\[ \tPe(x,y) = \sum_{n=0}^\infty \epsilon^n
 \Big(n+\frac12\Big)P_n(x)P_n(y), \qquad x,y\in(-1,1),  \label{tRS}   \]
can~be summed to the rather complicated expression
\begin{align*}
 & \tPe(\cos2\phi,\cos2\theta) = \\
 & \qquad \frac{1-\epsilon}{2(1+\epsilon)^2}
 \sum_{m,n=0}^\infty \frac{(m+n)!(\frac32)_{m+n}} {(m!n!)^2} \frac
 {(4\epsilon\sin^2\phi\sin^2\theta)^m (4\epsilon\cos^2\phi\cos^2\theta)^n}
 {(1+\epsilon)^{2m+2n}} ,   \end{align*}
see~\cite[(7.5.6)]{Askey}.
(Incidentally, the series like $\tKe,\tLe$ and $\tPe$ are called the ``Poisson
kernels'' for the corresponding orthogonal polynomials in~\cite{Askey}.
The series on the right-hand side in the last formula is Appell's
hypergeometric function~$F_4$ in Horn's notation~\cite[\S5.7.1]{BE}.)

The~differential equation~is
$$ (1-x^2)P''_n(x) - 2x P'_n(x) + n(n+1) P_n(x) = 0 ,   $$
implying that
$$ AP_n=nP_n  $$
for
\[ A = \sqrt{-D(1-x^2)D+\frac14I}-\frac12I   \label{tRO}   \]
($D=d/dx$, and the square root is understood in the spectral-theoretic sense).
As~before, it~follows that the corresponding ``Toeplitz'' operators
$$ \Te_f u(x) = \int_{-1}^1 u(y) f(y) \tPe(x,y) \, dy, \qquad 0<\epsilon<1,  $$
satisfy~(\ref{tRJ}), with the operator~$A$ from~(\ref{tRO}). There are also the
corresponding Hilbert spaces $\PP_\epsilon$ of functions on $(-1,1)$
having $\tPe$ for their reproducing kernel; however, unlike the situation for
Hermite and Laguerre polynomials, in~this case $\PP_\epsilon$ no longer
extend to~a~reproducing kernel Hilbert space on a larger~set.

\begin{proposition}
There exists no domain $\Omega$ in~$\CC$ containing the interval $(-1,1)$ and
such that for each $0<\epsilon<1$, $\PP_\epsilon$ would consist of
restrictions to $(-1,1)$ of functions in some reproducing kernel Hilbert space
of holomorphic functions on~$\Omega$.   \end{proposition}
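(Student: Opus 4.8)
The plan is to argue by contradiction, turning the set-theoretic hypothesis into a bounded evaluation functional and then defeating it with the exponential growth of $P_n$ off the interval. Suppose such a domain $\Omega$ existed. Since $\Omega$ is open and contains $(-1,1)$, it contains a disc about $0$, hence some point $z_0\in\Omega\setminus[-1,1]$. For the fixed $\epsilon$ I will choose below, let $\mathcal G_\epsilon$ be the reproducing kernel Hilbert space of holomorphic functions on $\Omega$ whose restrictions to $(-1,1)$ exhaust $\PP_\epsilon$, and consider the restriction map $R\colon\mathcal G_\epsilon\to\PP_\epsilon$, $g\mapsto g|_{(-1,1)}$. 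It is surjective by hypothesis and injective because $(-1,1)$ is a set of uniqueness for functions holomorphic on the connected set $\Omega$. Note that it suffices to reach a contradiction for a single value of $\epsilon$, since the hypothesis is asserted for all $0<\epsilon<1$.

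First I would show $R$ has a bounded inverse. Boundedness of $R$ follows from the closed graph theorem: norm convergence in either RKHS forces pointwise convergence, so if $g_k\to g$ in $\mathcal G_\epsilon$ and $Rg_k\to h$ in $\PP_\epsilon$, comparing pointwise limits on $(-1,1)$ gives $h=g|_{(-1,1)}=Rg$. As $R$ is then a bounded bijection of Hilbert spaces, $R^{-1}$ is bounded. Consequently the functional $L(f):=(R^{-1}f)(z_0)$, which analytically continues $f$ from $(-1,1)$ and evaluates at $z_0$, is bounded on $\PP_\epsilon$, being the composition of the bounded operator $R^{-1}$ with the bounded point evaluation $g\mapsto g(z_0)$ on $\mathcal G_\epsilon$.

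Next I would evaluate $L$ on a natural orthonormal basis. From the kernel $\tPe(x,y)=\sum_n\epsilon^n(n+\tfrac12)P_n(x)P_n(y)$ in $(\ref{tRS})$ and the standard expansion of a reproducing kernel in an orthonormal basis, the functions $\phi_n:=\sqrt{\epsilon^n(n+\tfrac12)}\,P_n$ form an orthonormal basis of $\PP_\epsilon$ (the $P_n$ being linearly independent). Since the polynomial $P_n$ is its own analytic continuation, $L(\phi_n)=\sqrt{\epsilon^n(n+\tfrac12)}\,P_n(z_0)$, and boundedness of $L$ forces $\sum_n|L(\phi_n)|^2=\sum_n\epsilon^n(n+\tfrac12)\,|P_n(z_0)|^2<\infty$. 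To contradict this I would use the growth of $P_n(z_0)$: writing $\rho(z_0):=|z_0+\sqrt{z_0^2-1}|$, the generating series $\sum_n P_n(z)t^n=(1-2zt+t^2)^{-1/2}$ has, for $z=z_0\notin[-1,1]$, its nearest singularity in $t$ at $|t|=1/\rho(z_0)$ (the two roots of $1-2z_0t+t^2$ are reciprocal, and $\rho(z_0)>1$ exactly because $z_0\notin[-1,1]$). Hence $\limsup_n|P_n(z_0)|^{1/n}=\rho(z_0)>1$. Choosing $\epsilon$ with $\rho(z_0)^{-2}<\epsilon<1$ and then $\delta>0$ small enough that $\epsilon(\rho(z_0)-\delta)^2>1$, for infinitely many $n$ one has $\epsilon^n(n+\tfrac12)|P_n(z_0)|^2>(n+\tfrac12)\big(\epsilon(\rho(z_0)-\delta)^2\big)^n\to\infty$, so the series diverges, contradicting the displayed summability.

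Geometrically this is just the statement that the Bernstein ellipse of convergence $\{\rho(z)<\epsilon^{-1/2}\}$ for $\PP_\epsilon$ collapses onto $[-1,1]$ as $\epsilon\nearrow1$, leaving no room for a fixed $\Omega$. I expect the genuinely structural step to be the passage from the bare set equality ``$\PP_\epsilon=R(\mathcal G_\epsilon)$'' to a quantitative bound, i.e. the automatic boundedness of $R^{-1}$ and hence of the continuation-and-evaluation functional $L$; once that is in hand, the analytic input is classical and amounts only to pinning down $\limsup_n|P_n(z_0)|^{1/n}=\rho(z_0)$ via the generating function. The one point to watch is that divergence of the defining series is not by itself obstruction to continuation, which is precisely why the argument is routed through the boundedness of $L$ rather than through convergence of $\sum_n f_nP_n(z_0)$ directly.
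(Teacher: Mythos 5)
Your proof is correct and follows essentially the same route as the paper's: both reduce the hypothesis to the convergence of $\sum_n(n+\tfrac12)\epsilon^n|P_n(z_0)|^2$ at some $z_0\in\Omega\setminus[-1,1]$ and then contradict this using the generating function $(1-2z_0t+t^2)^{-1/2}$, whose branch points force $\limsup_n|P_n(z_0)|^{1/n}=|z_0+\sqrt{z_0^2-1}|>1$ off the interval. The only differences are cosmetic: you justify the needed summability via the closed graph theorem and a bounded continuation-and-evaluation functional (a step the paper compresses into ``the standard formula for the reproducing kernel in terms of an orthonormal basis''), and you obtain the contradiction from a single well-chosen $\epsilon$ via the root test, whereas the paper uses all $\epsilon\in(0,1)$ together with a Cauchy--Schwarz estimate to deduce that $\sum_n z^nP_n(z_0)$ would converge for every $|z|<1$.
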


\begin{proof}
Assume, to~the contrary, that such a domain $\Omega$ and reproducing kernel
Hilbert spaces $\Peo$, $0<\epsilon<1$, exist. For~each~$\epsilon$, the function
$\tPe(x,y)$ then extends to a function (still denoted~$\tPe$) on~all of
$\Omega\times\Omega$, holomorphic in~$x,\oy$, which is the reproducing
kernel of~$\Peo$; furthermore, by~the standard formula for the reproducing
kernel in terms of an orthonormal basis, the~series (\ref{tRS}) converges
for any $x,y\in\Omega$. Thus, in~particular, the~series
$$ \sum_{n=0}^\infty (n+\tfrac12)\epsilon^n|P_n(x)|^2   $$
converges for any $\epsilon\in(0,1)$ and $x\in\Omega$.
By~the Cauchy-Schwarz inequality
$$ \Big(\sum_n|z^n P_n(x)|\Big)^2 \le \Big(\sum_n|z|^n\Big)
 \Big(\sum_n|z|^n|P_n(x)|^2\Big) \le \frac2{1-|z|}
 \sum_n(n+\tfrac12)|z|^n|P_n(x)|^2   $$
it~thus follows that for any $x\in\Omega$, the~series $\sum_n z^n P_n(x)$
converges for any $|z|<1$. However, using the familiar generating function
for Legendre polynomials
$$ \sum_{n=0}^\infty z^n P_n(x) = (1-2xz+z^2)^{-1/2},
 \qquad |z|<1, \; -1\le x\le 1,   $$
we~quickly see that the series on the left-hand side converges precisely~for
$$ |z| < \min(|x+\sqrt{x^2-1}|,|x-\sqrt{x^2-1}|) .   $$
Since
$$ |x+\sqrt{x^2-1}|\cdot|x-\sqrt{x^2-1}|=1 ,  $$
the~series can thus converge for all $|z|<1$ only if both $x+\sqrt{x^2-1}$ and
$x-\sqrt{x^2-1}$ lie on the unit circle, that~is, if~and only if $x\in[-1,1]$.
\end{proof}

\begin{remark*}
For~a~given $\epsilon\in(0,1)$, the~domain of convergence of the series
(\ref{tRS}) is given~by (cf.~\cite[\S5.7]{BE})
$$ |(1-x)(1-y)|^{1/2}+|(1+x)(1+y)|^{1/2} < \epsilon^{1/2}+\epsilon^{-1/2}.  $$
Thus $\PP_\epsilon$ actually extends to a reproducing kernel Hilbert
space of holomorphic functions on the ellipse
$$ \Omega_\epsilon := \{x\in\CC: \; |1-x|+|1+x|<(1+\epsilon)/\sqrt\epsilon\} $$
which however shrinks to the interval $[-1,1]$ as $\epsilon\nearrow1$.
\end{remark*}

A~more explicit description of the space $\PP_\epsilon$ for a given $\epsilon$
was given in~\cite{karp}. One~can treat in the same way also the Jacobi
polynomials $P_n^{(\alpha,\beta)}$, $\alpha,\beta>-1$ (of~which $P_n$ are
the special case $\alpha=\beta=0$); we~omit the details.

\section{Final remarks: other sequences}
The~choice of the powers $\epsilon^n$ in (\ref{tTF}), (\ref{tRP}) and
(\ref{tRS}) may admittedly seem rather haphazard. It~is in fact possible
to give a fairly complete picture of what happens, from the point of view
of existence of the reproducing kernel Hilbert spaces like $\HHe$,
$\LL_\epsilon$ and~$\PP_\epsilon$, when it is replaced by other
sequences of positive coefficients.

\begin{theorem}
Let $c_n$ be a sequence of positive numbers. Then the following are equivalent:
\begin{enumerate}
\item[(a)] The series $P_c(x,y):=\sum_n c_n P_n(x)P_n(y)$ converges for all
$x,y\in[-1,1]$ and $P_c(x,y)$ is the reproducing kernel of the Hilbert space
$$ \PP_c := \{ f=\sum_n f_n P_n: \;
 \sum_n c_n^{-1}|f_n|^2 =: \|f\|_{\PP_c}^2 <\infty \}  $$
of functions on~$[-1,1]$.
\item[(b)] $\sum_n c_n<\infty$.
\end{enumerate}
\end{theorem}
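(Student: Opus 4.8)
The plan is to reduce both implications to two elementary, classical properties of the Legendre polynomials: the normalization $P_n(1)=1$ and the uniform bound $|P_n(x)|\le1$ on $[-1,1]$. The first controls the ``only if'' direction, the second the ``if'' direction, and together they single out the endpoint $x=1$ as the decisive test point.

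For the implication (a)$\Rightarrow$(b), I would simply specialize the assumed convergence of $P_c(x,y)=\sum_n c_n P_n(x)P_n(y)$ to $x=y=1$. One has $P_n(1)=1$ for every $n$, as is immediate from the generating function $\sum_n z^n P_n(x)=(1-2xz+z^2)^{-1/2}$ upon setting $x=1$ (the right-hand side becomes $(1-z)^{-1}=\sum_n z^n$). Hence at this point the series reads $\sum_n c_n$, and, the terms being positive, its convergence is precisely condition (b). This direction is therefore essentially automatic once convergence is required at the endpoint.

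For (b)$\Rightarrow$(a), I would first invoke $|P_n(x)|\le1$ on $[-1,1]$ to get $\sum_n c_n|P_n(x)P_n(y)|\le\sum_n c_n<\infty$, so that $P_c(x,y)$ converges absolutely and uniformly on $[-1,1]^2$; this takes care of the convergence half of (a). It then remains to identify $P_c$ as the reproducing kernel of $\PP_c$. With the inner product $\spr{f,g}=\sum_n c_n^{-1}f_n\overline{g_n}$ underlying the defining norm, a direct computation gives $\spr{P_n,P_m}=c_n^{-1}\delta_{mn}$, so $\{\sqrt{c_n}\,P_n\}_n$ is an orthonormal basis of $\PP_c$. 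To confirm that $\PP_c$ is a genuine Hilbert space of functions on $[-1,1]$ (rather than of formal coefficient sequences), I would check that point evaluations are bounded: by Cauchy--Schwarz, for $f=\sum_n f_n P_n\in\PP_c$,
$$ \Big|\sum_n f_n P_n(x)\Big| \le \|f\|_{\PP_c}\Big(\sum_n c_n|P_n(x)|^2\Big)^{1/2} \le \|f\|_{\PP_c}\Big(\sum_n c_n\Big)^{1/2}, $$
so $f(x)=\sum_n f_n P_n(x)$ converges and the evaluation functional is bounded. The standard formula $K(x,y)=\sum_j e_j(x)\overline{e_j(y)}$ for the reproducing kernel in terms of any orthonormal basis (the same formula already used in Section~3) then yields $K(x,y)=\sum_n c_n P_n(x)\overline{P_n(y)}=P_c(x,y)$, using that the $P_n$ are real on $[-1,1]$; this establishes (a).

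The only step requiring any care is the verification in (b)$\Rightarrow$(a) that $\PP_c$ is a bona fide reproducing kernel Hilbert space of functions, namely that point evaluations are well-defined and bounded; but this follows at once from Cauchy--Schwarz together with $|P_n|\le1$, so I do not anticipate a real obstacle. The substance of the theorem lies entirely in the two elementary facts about the Legendre polynomials noted at the outset --- the normalization at $x=1$ and their uniform boundedness on the interval --- which is exactly what makes the behaviour here \emph{differ} from the Hermite and Laguerre cases, where the unbounded growth of the polynomials off the real axis was what produced the larger holomorphic reproducing kernel spaces.
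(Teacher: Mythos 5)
Your proposal is correct and follows essentially the same route as the paper: the implication (a)$\Rightarrow$(b) by evaluating at $x=y=1$ using $P_n(1)=1$, and (b)$\Rightarrow$(a) by combining the uniform bound $|P_n|\le1$ on $[-1,1]$ with Cauchy--Schwarz to show convergence of the kernel and boundedness of point evaluations, whence $P_c$ is the reproducing kernel. The only cosmetic difference is that you identify the kernel via the orthonormal basis $\{\sqrt{c_n}\,P_n\}$ while the paper exhibits $f(y)=\spr{f,P_{c,y}}_{\PP_c}$ directly; these are equivalent.
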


\begin{proof}
(a)$\implies$(b) This is immediate upon taking $x=y=1$, since $P_n(1)=1$
$\forall n$.

(b)$\implies$(a) Since
\[ |P_n(x)| \le 1 \qquad \forall n \; \forall x\in[-1,1]  \label{tPJ}  \]
(this follows e.g.~from the first formula in~\cite[10.10(42)]{BE}),
clearly $P_c(x,y)$ converges for all $x,y\in[-1,1]$, and thus
$P_{c,y}:=P_c(\cdot,y)$ belongs to $\PP_c$ for each $y\in[-1,1]$.
The~rest of the argument is the same as in the proof of
Proposition~1 in~\cite{AEH}: namely, for~any $f\in\PP_c$,
we~have using again~(\ref{tPJ})
$$ \sum_n|f_n P_n(y)| \le \Big(\sum_n c_n^{-1}|f_n|^2\Big)^{1/2}
 \Big(\sum_n c_n|P_n(y)|^2\Big)^{1/2}
 = \|f\|_{\PP_c} P_c(y,y)^{1/2} < \infty  $$
showing that the series $\sum_n f_n P_n(y)=:f(y)$ converges and
$f\mapsto f(y)=\spr{f,P_{c,y}}_{\PP_c}$ is~a~bounded linear
functional~on~$\PP_c$. Thus $\PP_c$ is a reproducing kernel
Hilbert space with reproducing kernel $P_c(x,y)$, as~asserted.
\end{proof}

\begin{theorem}
Let $c_n$ be a sequence of positive numbers. Then the following are equivalent:
\begin{enumerate}
\item[(a)] The series $L_c(x,y):=\sum_n c_n L_n(x)L_n(y)$ converges for all
$x,y\ge0$ and $L_c(x,y)$ is the reproducing kernel of the Hilbert space
$$ \LL_c := \{ f=\sum_n f_n L_n: \;
 \sum_n c_n^{-1}|f_n|^2 =: \|f\|_{\LL_c}^2 <\infty \}  $$   
of functions on~$[0,\infty)$.  
\item[(b)] $\sum_n c_n<\infty$.
\end{enumerate}
\end{theorem}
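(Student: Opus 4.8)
The plan is to imitate, almost verbatim, the proof of the preceding theorem (the Legendre case) and of Proposition~1 in~\cite{AEH}, the only substantive change being that the uniform bound $|P_n(x)|\le1$ must be replaced by a suitable uniform-in-$n$ estimate for the Laguerre polynomials. The estimate I would use is the classical inequality
\[ |L_n(x)| \le e^{x/2} \qquad(x\ge0,\ n=0,1,2,\dots), \label{tLagB} \]
equivalently $|l_n(x)|\le1$ for the orthonormal Laguerre functions $l_n(x)=e^{-x/2}L_n(x)$ of Section~2, the maximum being attained at $x=0$. The factor $e^{x/2}$ is harmless: it is independent of~$n$, and in the argument below one only ever requires convergence of a series at a \emph{fixed} point.

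For the implication (a)$\implies$(b) I would simply evaluate at the (admissible) point $x=y=0$. Since $L_n(0)=1$ for every~$n$, mere convergence of $L_c(x,y)$ there reads $\sum_n c_n L_n(0)^2=\sum_n c_n<\infty$, which is~(b).

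For (b)$\implies$(a) I would first fix $x\ge0$ and deduce from \eqref{tLagB} that $\sum_n c_n|L_n(x)|^2\le e^x\sum_n c_n<\infty$; Cauchy--Schwarz then gives $\sum_n c_n|L_n(x)L_n(y)|<\infty$ for all $x,y\ge0$, so the series $L_c(x,y)$ converges absolutely everywhere on $[0,\infty)^2$. Setting $L_{c,y}:=L_c(\cdot,y)=\sum_n c_nL_n(y)L_n$, its coefficients $f_n=c_nL_n(y)$ obey $\sum_n c_n^{-1}|f_n|^2=\sum_n c_n|L_n(y)|^2=L_c(y,y)<\infty$, whence $L_{c,y}\in\LL_c$. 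Then, exactly as in the Legendre proof, for any $f=\sum_n f_nL_n\in\LL_c$ one estimates with \eqref{tLagB}
$$ \sum_n|f_nL_n(y)| \le \Big(\sum_n c_n^{-1}|f_n|^2\Big)^{1/2}\Big(\sum_n c_n|L_n(y)|^2\Big)^{1/2} = \|f\|_{\LL_c}\,L_c(y,y)^{1/2} < \infty, $$
so that the series $\sum_n f_nL_n(y)=:f(y)$ converges and $f\mapsto f(y)=\spr{f,L_{c,y}}_{\LL_c}$ is a bounded linear functional on~$\LL_c$. Hence $\LL_c$ is a reproducing kernel Hilbert space with reproducing kernel $L_c(x,y)$, which is~(a).

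The only genuinely non-cosmetic point --- the part I expect to be the main obstacle --- is the unboundedness of the $L_n$ on $[0,\infty)$, which is exactly why the tidy Legendre argument does not transcribe literally. The obvious candidate replacement, the Cauchy-estimate bound \eqref{tLE} from Theorem~\ref{thmLag}, is useless here: it carries an unavoidable factor $r^{-n}$ (for $r<1$), so $\sum_n c_n|L_n(x)|^2$ could not be controlled through it once $c_n$ is summable but decays only barely fast enough (say $c_n\sim n^{-2}$). Invoking instead the uniform estimate \eqref{tLagB} removes this difficulty at a stroke, after which the reproducing-kernel bookkeeping is word-for-word the same as in the Hermite and Legendre cases.
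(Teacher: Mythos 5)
Your proof is correct, and it reaches the conclusion by a genuinely different key estimate than the paper. For (b)$\implies$(a) the paper does not use a uniform-in-$n$ bound at all: it writes $L_n(x)=\frac{(-1)^n}{n!}\Psi(-n,1,x)$ and invokes the asymptotics of the confluent hypergeometric function $\Psi$ from Bateman--Erd\'elyi to get the decaying pointwise bound $|L_n(x)|\le C_x n^{-1/4}$ for $x>0$ and $n$ large, after which the reproducing-kernel bookkeeping is the same as in the Legendre theorem. You instead use Szeg\H o's classical uniform inequality $|L_n(x)|\le e^{x/2}$ on $[0,\infty)$ (equivalently $|l_n(x)|\le1$; this is (7.21.3) in Szeg\H o's book and should be cited, since it is asserted rather than proved). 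Both estimates do the job, because all that is needed is that $|L_n(x)|$ be bounded in $n$ at each fixed $x$, the dependence on $x$ being irrelevant; your route is more elementary and valid at $x=0$ as well, whereas the paper's asymptotic bound is stated for $x>0$ and the origin is covered separately by $L_n(0)=1$. What the paper's sharper $n^{-1/4}$ decay buys is visible in the \emph{next} theorem (the Hermite case), where exactly this kind of decay turns the natural condition into $\sum_n n^{-1/2}c_n<\infty$; here, since the point $x=y=0$ already forces $\sum_n c_n<\infty$ in (a)$\implies$(b), the extra sharpness is not needed and your uniform bound suffices. Your (a)$\implies$(b) step is identical to the paper's.
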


\begin{proof}
(a)$\implies$(b) Immediate upon taking $x=y=0$, since $L_n(0)=1$ $\forall n$.

(b)$\implies$(a) Recall that the Legendre polynomials are related
by the formula
$$ L_n(x) = \frac{(-1)^n}{n!} \Psi(-n,1,x), \qquad x\neq0,   $$
to~the confluent hypergeometric function~$\Psi$ \cite[10.12(14)]{BE}.
The~latter possesses the asymptotic behaviour \cite[6.13(8)]{BE}
\[ \Psi(a,c,x) = \kappa^{\kappa-\frac14} x^{\frac c2-\frac14}
 e^{\frac x2-\kappa} \sqrt2 \cos(\kappa\pi-2\sqrt{\kappa x}-\tfrac\pi4)
 \cdot[1+O(|\kappa|^{-1/2})]
 \label{tPS}  \]
as $\kappa:=\tfrac c2-a\to+\infty$.
For $x>0$, the~cosine is bounded by 1 in modulus, thus by Stirling's formula
$$ |L_n(x)| \le C_x n^{-1/4}   $$
for all $n$ large enough, and the convergence of $L_c(x,y)$ follows.
The~rest of the proof is the same as for the preceding theorem.
\end{proof}

\begin{remark*} Proceeding as in the proof of Theorem~\ref{thmLag}, one~can
show that $L_c(x,y)$ in fact converges for all $x,y\in\CC$, and $\LL_c$ extends
to a space of holomorphic functions on all of~$\CC$, as~soon as $\sum_n c_n
r^{-2n}<\infty$ for some $r\in(0,1)$. The~latter condition can in fact be
relaxed~to
\[ \sum_n c_n e^{a\sqrt n} < \infty \qquad \forall a>0  \label{tCN}  \]
(or,~equivalently, $c_n^{1/\sqrt n}\to0$) by~using~(\ref{tPS}).   \end{remark*}

For~the spaces of Hermite polynomials, it~is easy to see that $\sum_n c_n
\frac{H_n(x) H_n(y)}{n!2^n\sqrt\pi}$ converges for $x=y=0$ if and only if
$\sum_{n=1}^\infty c_{2n}/\sqrt n<\infty$; unfortunately, handling the $c_n$
with odd $n$ seems more difficult. We~can however offer the following result.

\begin{theorem}
Let $c_n$ be a sequence of positive numbers. Then the following are equivalent:
\begin{enumerate}
\item[(a)] The two series $H_c(x,y):=\sum_n c_n H_n(x)H_n(y)
(n!2^n\sqrt\pi)^{-1}$ and $H^\#_c(x,y):=\sum_n c_{n+1} H_n(x)H_n(y)
(n!2^n\sqrt\pi)^{-1}$ converge for all $x,y\in\RR$ and $H_c(x,y)$
is the reproducing kernel of the Hilbert space
$$ \HH_c := \{ f=\sum_n f_n (n!2^n\sqrt\pi)^{-1/2} H_n: \;
 \sum_n c_n^{-1}|f_n|^2 =: \|f\|_{\HH_c}^2 <\infty \}  $$
of functions on~$\RR$.
\item[(b)] $\sum_{n=1}^\infty n^{-1/2}c_n<\infty$.
\end{enumerate}
\end{theorem}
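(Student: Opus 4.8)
The plan is to prove the two implications separately, extracting from (a) only the finiteness of the two series at the single point $x=y=0$, and using the full pointwise decay of the Hermite functions for the converse. Throughout I would write $h_n := (n!2^n\sqrt\pi)^{-1/2}H_n$ for the normalized polynomials (orthonormal in $L^2(\RR,e^{-x^2})$) and $\psi_n := e^{-x^2/2}h_n$ for the Hermite functions (orthonormal in $L^2(\RR)$), so that $H_c(x,y)=\sum_n c_n h_n(x)h_n(y)$ and the $n$-th summand of $H_c(x,x)$ is $c_n\psi_n(x)^2 e^{x^2}$.

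For (a)$\Rightarrow$(b) the decisive observation is that $H_n(0)=0$ for odd $n$, while $H_{2k}(0)=(-1)^k(2k)!/k!$, so that $h_{2k}(0)^2=\pi^{-1/2}\binom{2k}{k}4^{-k}\sim(\pi\sqrt k)^{-1}$ by the elementary asymptotics of the central binomial coefficient. Evaluating the two series at $x=y=0$ therefore gives $H_c(0,0)=\sum_k c_{2k}\,h_{2k}(0)^2$ and $H^\#_c(0,0)=\sum_k c_{2k+1}\,h_{2k}(0)^2$, the first seeing exactly the even-indexed coefficients and the second exactly the odd-indexed ones. This is precisely what the index shift in $H^\#_c$ is for, since a single Poisson-type series at the origin is blind to the odd $c_n$. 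As $h_{2k}(0)^2\asymp(2k)^{-1/2}\asymp(2k+1)^{-1/2}$, finiteness of both quantities is equivalent to $\sum_k c_{2k}(2k)^{-1/2}<\infty$ together with $\sum_k c_{2k+1}(2k+1)^{-1/2}<\infty$, and the conjunction of these is exactly $\sum_{n\ge1}c_n n^{-1/2}<\infty$. Note that this direction uses nothing beyond the bare convergence of the two series.

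For (b)$\Rightarrow$(a) I would first settle convergence. The single analytic input required is the classical fact that for each fixed $x\in\RR$ one has $|\psi_n(x)|\le C_x\,n^{-1/4}$ as $n\to\infty$ (the Plancherel--Rotach asymptotics in the oscillatory regime, valid since for fixed $x$ all large $n$ lie well inside the turning points $\pm\sqrt{2n+1}$). Then $|h_n(x)h_n(y)|=|\psi_n(x)\psi_n(y)|\,e^{(x^2+y^2)/2}\le C_xC_y\,e^{(x^2+y^2)/2}\,n^{-1/2}$, whence $H_c(x,y)=\sum_n c_nh_n(x)h_n(y)$ converges absolutely for every $x,y$ by comparison with $\sum_n c_n n^{-1/2}$; for $H^\#_c$ one argues identically, the $n=0$ term being a harmless constant and $\sum_{n\ge1}c_{n+1}n^{-1/2}=\sum_{m\ge2}c_m(m-1)^{-1/2}\le\sqrt2\sum_{m\ge1}c_m m^{-1/2}<\infty$. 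The remaining reproducing-kernel part is the routine argument of the two preceding theorems: the coefficient map $f\mapsto(f_n)$ is an isometry of $\HH_c$ onto the complete weighted sequence space $\{(f_n):\sum_n c_n^{-1}|f_n|^2<\infty\}$, so $\HH_c$ is a Hilbert space with orthonormal basis $e_n:=c_n^{1/2}h_n$; for fixed $y$, the already-established finiteness of $H_c(y,y)=\sum_n c_n h_n(y)^2$ and Cauchy--Schwarz give $\sum_n|f_nh_n(y)|\le\|f\|_{\HH_c}H_c(y,y)^{1/2}$, so $f(y)=\sum_n f_nh_n(y)$ converges and point evaluation is bounded; the standard formula $K(x,y)=\sum_n e_n(x)\overline{e_n(y)}=\sum_n c_n h_n(x)h_n(y)=H_c(x,y)$ then identifies the kernel, proving (a).

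The only genuinely non-elementary ingredient is the pointwise $n^{-1/4}$ decay of the Hermite functions in the convergence half of (b)$\Rightarrow$(a); everything else is bookkeeping. The reason the Hermite case is harder than the Laguerre and Legendre ones — and why the statement is forced to carry the auxiliary series $H^\#_c$ — is exactly the vanishing $H_n(0)=0$ for odd $n$: a single series evaluated anywhere cannot by itself detect both parities of $c_n$ without a delicate non-oscillation estimate for $\psi_n(x)^2+\psi_{n-1}(x)^2$, whereas testing the \emph{pair} $(H_c,H^\#_c)$ at the origin separates the parities cleanly and collapses the whole equivalence onto the elementary central-binomial asymptotics used above.
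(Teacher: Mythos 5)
Your proof is correct and follows essentially the same route as the paper: evaluation of $H_c$ and $H^\#_c$ at the origin together with the central-binomial asymptotics for the forward implication, and the pointwise $n^{-1/4}$ decay of the normalized Hermite polynomials (which the paper cites from Schwid rather than Plancherel--Rotach, but it is the same estimate) plus the standard Cauchy--Schwarz/orthonormal-basis argument for the converse. No gaps.
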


\begin{proof}
(a)$\implies$(b) As~already mentioned, taking $x=y=0$, $H_{2n+1}(0)=0$ and
$H_{2n}(0)=\frac{(2n)!(-1)^n}{n!}$ imply that
$$ \infty > \sum_n c_{2n} \frac{(2n)!}{n!^2 2^{2n}}
 = \sum_n c_{2n} \frac{(\frac12)_n}{n!} \sim \sum_n c_{2n} n^{-1/2} ;  $$
the same argument for $H^\#_c$ gives $\sum_n c_{2n+1}n^{-1/2}<\infty$,
and (b) follows.

(b)$\implies$(a) According to a result of Schwid~\cite[Theorem~VIII(a)]{Schwi}
and Stirling's formula,
$$ \frac{H_n(z)}{\sqrt{n!2^n\pi^{1/2}}} =
 2^{1/4} \pi^{-1/2} e^{z^2/2} n^{-1/4} [1+O(n^{-1})]
 \Big[\cos\Big(\frac{\pi n}2-z\sqrt{2n+1}\Big)+O(n^{-1/2})\Big]   $$
as $n\to+\infty$. For $z$ real, the cosine is bounded,~so
$$ \Big| \frac{H_n(z)}{\sqrt{n!2^n\pi^{1/2}}} \Big| \le C_z n^{-1/4}
 \qquad\text{ for $n$ large enough,}   $$
and the convergence of $H_c(x,y)$ for any $x,y\in\RR$ follows.
The~assertion for $H^\#_c(x,y)$ is obtained upon replacing
$\{c_n\}$ by~$\{c_{n+1}\}$.   \end{proof}

\begin{remark*}
It~follows from the proof of Theorem~2 in \cite{AEH} that, again,
$H_c(x,y)$ in fact converges for all $x,y\in\CC$, and $\HH_c$
extends to a space of holomorphic functions on all of~$\CC$,
as~soon as the sequence $\{c_n\}$ satisfies~(\ref{tCN}).

For~Legendre polynomials, the~condition for $P_c(x,y)$ to converge for
all $x,y\in\CC$, and for $\PP_c$ to extend to a reproducing kernel
Hilbert space of holomorphic functions on all of~$\CC$, can~similarly
be shown to be $c_n^{1/n}\to0$. 
\end{remark*}

\section{Conclusion}
Generally, if we we start with a family of real polynomials $p_n(x), \;\;x \in
\mathbb R, \;\; n =0,1,2, \ldots, \infty$, which are orthonormal with respect
to a measure $d\mu$ over $\mathbb R$, the sum $\sum_{n=0}^\infty p_n(x)p_n(y)$
is usually divergent. However, there exist families of  polynomials, such as
the ones considered in this paper, for which the sum $K_\epsilon (x, y) =
\sum_{n=0}^\infty \epsilon^n p_n(x)p_n(y), \;\; 0<\epsilon <1$, coverges for
all $x,y$. In that case $K_\epsilon (x,y)$ defines a reproducing kernel and the
polynomials $\epsilon^{\frac n2}p_n (x)$ constitute an orthonormal basis for
the corresponding Hilbert space $\mathfrak H_\epsilon$. However, although
$\mathfrak H_\epsilon \subset L^2 (\mathbb R, d\mu)$, it is in general not a
Hilbert subspace. On the other hand, if we write the same polynomials in a
complex variable, $\epsilon^{\frac n2}p_n(z), \;\; z \in \mathbb C$, it often
turns out that the sum $K_\epsilon (z, z') = \sum_{n=0}^\infty \epsilon^n
p_n(z)p_n(z'), \;\; z, z' \in \mathbb C$, is convergent in some domain of the
complex plane, in which it defines a reproducing kernel. Moreover, the
corresponding reproducing kernel Hilbert space turns out to be a (holomorphic)
subspace of an $L^2$-space over this domain. This is the general situation
which is known to happen, for example, for the Hermite, Laguerre and Jacobi
polynomials. Additionally, a large number of other interesting questions
emerge, related to such families of polynomials and reproducing kernel Hilbert
spaces. In this paper and in \cite{AEH}, we have looked at the questions of
Berezin-Toeplitz quantization using the real kernel $K_\epsilon (x,y)$ and its
semi-classical approximation, and to certain physical questions related to
``squeezing'' of coherent states. In a future publication we plan to look at
the problems of the associated non-linear coherent states and complex
orthogonal polynomials related to such systems.

\section*{Acknowledgements}
The work reported here was partly supported by a GA\v CR grant no.~201/12/0426,
RVO funding for I\v CO~67985840 and the Natural Sciences and Engineering
Research Council (NSERC) of Canada. 
Part of this work was done while the second author was visiting the first in
September~2014; the hospitality of the Department of Mathematics and Statistics
of Concordia University on that occasion is gratefully acknowledged.


\begin{thebibliography}{99}
\bibitem{Ake} G. Akemann, {\it Complex Laguerre symplectic ensemble of
non-Hermitian matrices,\/} Nucl. Phys. B {\bf 730} (2005), 253--299.

\bibitem{Akhi} N.I. Akhiezer, {\it The classical moment problem,\/}
Oliver \& Boyd, London, 1965.

\bibitem{AEH} S. T. Ali, M. Engli\v s: {\it Hermite polynomials and
quasi-classical asymptotics,\/} J.~Math. Phys. {\bf 55} (2014), 042102.

\bibitem{AliKr} S.T. Ali, K.~Gorska, A.~Horzela, F.H.~Szafraniec:
{\it Squeezed states and Hermite polynomials in a complex variable,\/}
J. Math. Phys. {\bf 55} (2014), 012107 (11~pp).

\bibitem{Askey} G.E. Andrews, R. Askey, R. Roy, {\it Special functions,\/}
Cambridge University Press, Cambridge, 1999.

\bibitem{Aro} N. Aronszajn: {\it Theory of reproducing kernels\/},
Trans. Amer. Math. Soc. {\bf 68} (1950), 337--404.

\bibitem{BaG} A.O. Barut, L. Girardello: {\it New ``coherent'' states
associated with non-compact groups,\/} Comm. Math. Phys. {\bf21} (1971),
41--55.

\bibitem{BE} H. Bateman, A. Erd\'elyi, {\it Higher transcendental functions,
vol.~1--3,\/} McGraw-Hill, New York 1953--1955.

\bibitem{BeQ} F.A. Berezin: {\it Quantization,\/} Math. USSR Izvestiya
{\bf8} (1974), 1109--1163.

\bibitem{BMS} M. Bordemann, E. Meinrenken, M. Schlichenmaier:
{\it Toeplitz quantization of K\"ahler manifolds and $gl(n)$,
$n\to\infty$ limits,\/} Comm. Math. Phys. {\bf 165} (1994), 281--296.

\bibitem{E12} M. Engli\v s: {\it Berezin quantization and reproducing kernels
on complex domains,\/} Trans. Amer. Math. Soc. {\bf 348} (1996), 411--479.

\bibitem{E52} M. Engli\v s: {\it Berezin and Berezin-Toeplitz quantizations
for general function spaces,\/} Rev. Mat. Complut. {\bf 19} (2006), 385--430.

\bibitem{Fed} M.V. Fedoryuk: {\it Asymptotics, integrals, series\/}
(in~Russian), Nauka, Moscow, 1987.

\bibitem{ghan} A. Ghanmi: {\it A class of generalized complex Hermite
polynomials\/}, J. Math. Anal. {\bf 340} (2008), 1395--1406.

\bibitem{iszeng} M.E.H. Ismail, J. Zeng: {\it Two variable extensions of the
Laguerre and disc polynomials\/}, preprint (2014).

\bibitem{iszhang} M.E.H. Ismail, R. Zhang: {\it Classes of bivariate orthogonal
polynomials\/}, preprint (2014).

\bibitem{jotha} K. Jotsaroop, S. Thangavelu {\it Toeplitz operators with
special symbols on Segal-Bargmann spaces,\/} Integr. Equ. Oper. Theory
{\bf 69} (2011), 317–-346.

\bibitem{KarbSchl} A.V. Karabegov, M. Schlichenmaier: {\it Identification of
Berezin-Toeplitz deformation quantization,\/} J.~reine angew. Math. {\bf 540}
(2001), 49--76.

\bibitem{karp} D. Karp: {\it Square summability with geometric weights for
classical orthogonal expansions\/,} Advances in Analysis,
(H.G.W.~Begehr et~al, eds.), World Scientific, Singapore (2005), pp.~407--422.

\bibitem{odzhor} A. Odzijewicz, M. Horowski: {\it Positive kernels and
quantization,\/} J. Geom. Physics {\bf 63} (2013), 80-98. 

\bibitem{prugov} E. Prugove\v cki: {\it Consistent formulation of relativistic
dynamics for massive spin-zero particles in external fields,\/} Phys. Rev. D
{\bf 18} (1978), 3655--3673. 

\bibitem{Schwi} N. Schwid: {\it The~asymptotic forms of the Hermite and Weber
functions,\/} Trans. Amer. Math. Soc. {\bf 37} (1935), 339--362.

\bibitem{thanga} S. Thangavelu: {\it Hermite and Laguerre semigroups: some
recent developments\/,} Orthogonal families and semigroups in analysis and
probability, pp.~251-284, S\'emin. Congr.~25, Soc. Math. France, Paris, 2012. 

\bibitem{vanEijnd} S. J. L. van Eindhoven, J.L.H. Meyers:
{\it New orthogonality relations for the Hermite polynomials and related
Hilbert spaces,\/} J. Math. Anal. Appl. {\bf 146} (1990), 89--98.

\bibitem{HXua} H.~Xu: {\it An explicit formula for the Berezin star product,\/}
Lett. Math. Phys. {\bf 101} (2012), 239--264.

\end{thebibliography}
\end{document}